\newcommand\rev[1]{%
  \protect\leavevmode
  \begingroup
    \color{red}%
    \boldsymbol{#1}%
  \endgroup
}
  \tikzstyle{every picture}=[
\begin{document}
\markboth{S.Attou \emph{et al.}}
{Constrained Multi-Tildes: Derived Term and Position Automata}
\title{Constrained Multi-Tildes: Derived Term and Position Automata}
%
%
\author{
  Samira Attou,
  Ludovic Mignot,
  Clément Miklarz and
  Florent Nicart
}
\address{
  GR\textsuperscript{2}IF,
  Université de Rouen Normandie,
  Avenue de l'Université,
  76801 Saint-Étienne-du-Rouvray, France\\
  \email{\{samira.attou2, ludovic.mignot, clement.miklarz1, florent.nicart\}@univ-rouen.fr}
}

\maketitle

\begin{abstract}
  Multi-tildes are regular operators that were introduced to enhance the factorization power of
  regular expressions, allowing us to add the empty word in several factors of a catenation product
  of languages.
  In addition to multi-bars, which dually remove the empty word, they allow representing any acyclic
  automaton by a linear-sized expression, whereas the lower bound is exponential in the classic case.

  In this paper, we extend multi-tildes from disjunctive combinations to any Boolean combination,
  allowing us to exponentially enhance the factorization power of tildes expressions.
  Moreover, we show how to convert these expressions into finite automata and give a Haskell implementation
  of them using advanced techniques of functional programming.

\end{abstract}
\keywords{Regular expressions; Partial derivatives; Boolean formulae; Multi-tildes operators.}
%
\section{Introduction}

Regular expressions are widely used inductively defined objects that allow us to easily represent
(potentially infinite) set of words.
In order to solve efficiently the membership test, they can be turned into finite automata~\cite{Ant96}, where the number of states
is linear w.r.t.\ the number of symbols of the expressions.
Numerous operators where added in order to enhance their representation powers, such as Boolean operators.
However, the number of states after the conversion is not necessarily linear anymore~\cite{CCM11b}.

Another class of operators, the multi-tildes~\cite{CCM09}, was introduced in order to
allow a constrained adjunction of the empty word in some factors of the catenation product of regular languages.
In combination with multi-bars~\cite{CCM09b}, multi-tildes allow to improve the factorization power of regular languages:
as an example, it is shown that any acyclic automaton can be turned into a linear-sized equivalent  multi-tildes-bars expression, whereas the lower bound is exponential in the classical case~\cite{CCM12b}.
However, they can be applied only across continuous positions.

In this paper, we extend the idea behind the conception of (disjunctive) multi-tildes to any Boolean combination of them.
These Boolean combinations allow us to extend the specification power of expressions, by \emph{e.g.},
applying tildes across non-continuous intervals of positions.
We show that their actions over languages preserve regularity, that they may lead
to exponentially smaller expressions and how to solve the membership test by defining a finite automaton.

This is the first step of a more general plan: we aim to develop a characterization of the produced automaton in order to inverse the computation, \emph{i.e.}, the conversion of an automaton into a short constrained tildes expression.

The paper is organized as follows.
Section~\ref{sec prelim} contains general preliminaries.
Then, we recall in Section~\ref{sec Bool} classical definitions and constructions for Boolean formulae.
These latter allow us to define constrained tildes in Section~\ref{sec def cons tildes}.
We study their factorization power in Section~\ref{sec fact pow}, and show how to convert these expressions
into finite automata in Section~\ref{sec part der} and Section~\ref{sec Glu}.
Finally, in Section~\ref{sec haskell}, we present a Haskell implementation of these objects.

\section{Preliminaries}\label{sec prelim}
Throughout this paper, we use the following notations:
\begin{itemize}
  \item \( \mathbb{B} \) is the Boolean set \( \{0, 1\} \),
  \item  \(S \rightarrow S'\) is the set of functions from a set \(S\) to a set \(S'\),
  \item for a Boolean \(b\) and a set \(S\), \(S \mid b\) is the set \(S\) if \(b\), \(\emptyset \) otherwise,
  \item \( \subset \) is to be understood as not necessarily strict subset.
\end{itemize}
A \emph{regular expression} \(E\) over an alphabet \(\Sigma \) is inductively defined by
\begin{align*}
  E & = a,         & E & = \emptyset, & E & = \varepsilon, & 
  E & = F \cdot G, & E & = F + G,     & E & = F^*,
\end{align*}
where \(a\) is a symbol in \(\Sigma \) and \(F\) and \(G\) two regular expressions over \(\Sigma \).
Classical priority rules hold: \( ^* > \cdot > + \).
The language \emph{denoted by} \(E\) is the set \(L(E) \) inductively defined by
\begin{align*}
  L(a)        & = \{a\},           & L(\emptyset) & = \emptyset,      & L(\varepsilon) & = \{\varepsilon \}, \\
  L(F\cdot G) & = L(F) \cdot L(G), & L(F+G)       & = L(F) \cup L(G), & L(F^*)         & = {L(F)}^*,
\end{align*}
where \(a\) is a symbol in \(\Sigma \) and \(F\) and \(G\) two regular expressions over \(\Sigma \).
A (non-deterministic) \emph{automaton} \(A\) over an alphabet \(\Sigma \) is a \(5\)-tuple \((\Sigma, Q, I, F, \delta)\) where
\begin{multicols}{2}
  \begin{itemize}
    \item \( Q \) is a finite set of \emph{states},
    \item \(I\subset Q\) is the set of \emph{initial states},
    \item \(F\subset Q\) is the set of \emph{final states},
    \item \(\delta \) is a function in \(\Sigma \times Q \rightarrow 2^Q\).
  \end{itemize}
\end{multicols}
The function \(\delta \) is extended to \(\Sigma \times 2^Q \rightarrow 2^Q\) by
\(\delta(a, P) = \bigcup_{p\in P} \delta(a, p) \) and to \(\Sigma^* \times 2^Q \rightarrow 2^Q\) by
\(\delta(\varepsilon, P)  = P\) and \(\delta(aw, P) = \delta(w, \delta(a, P))\).
The language \emph{denoted} by \(A\) is the set \(L(A) =\{w \in \Sigma^* \mid \delta(w, I) \cap F \neq \emptyset\} \).

Any regular expression with \(n\) symbols can be turned into an equivalent automaton with at most \((n + 1)\) states,
for example by computing the derived term automaton~\cite{Ant96}.
The \emph{partial derivative} of \(E\) w.r.t.\ a symbol \(a\) in \(\Sigma \) is the set of expressions \(\delta_a(E)\)
inductively defined as follows:
\begin{align*}
  \delta_a(b)           & = (\{\varepsilon \} \mid a = b),                                & \delta_a(\emptyset) & = \emptyset,             \\
  \delta_a(\varepsilon) & = \emptyset,                                                    & 
  \delta_a(F+G)         & = \delta_a(F) \cup \delta_a(G),                                                                                  \\
  \delta_a(F \cdot G)   & = \delta_a(F) \odot G \cup (\delta_a(G) \mid \mathrm{Null}(F)), & \delta_a(F^*)       & = \delta_a(F) \odot F^*,
\end{align*}
where \(b\) is a symbol in \(\Sigma \), \(F\) and \(G\) two regular expressions over \(\Sigma \), \(\mathrm{Null}(F) = \varepsilon \in L(F)\) and \(\mathcal{E}\odot G = \bigcup_{E\in \mathcal{E}} \{E \cdot G\} \),
where \( \odot \) has priority over \(\cup\).
The partial derivative of \(E\) w.r.t.\ a word \(w\) is defined by \(\delta_\varepsilon(E) = \{E\} \) and \(\delta_{aw}(E) = \bigcup_{E' \in \delta_a(E)} \delta_w(E') \).
The \emph{derived term automaton} of \(E\) is the automaton \((\Sigma, Q, \{E\}, F, \delta)\) where
\begin{align*}
  Q             & = \bigcup_{w \in \Sigma^*} \delta_w(E), & 
  F             & = \{E' \in Q \mid \mathrm{Null}(E')\},  & 
  \delta(a, E') & = \delta_a(E').
\end{align*}
The derived term automaton of \(E\), with \(n\) symbols, is a finite automaton with at most \((n + 1)\) states
that recognizes \(L(E)\).

A \emph{multi-tilde} is an \(n\)-ary operator which is parameterized by a set \(S\) of couples \((i, j)\) (called \emph{tildes}) in \( {\{1, \ldots, n\}}^2\) with \(i \leq j\). Such an expression is denoted by \(\mathrm{MT}_S(E_1, \ldots, E_n)\) while it is applied over \(n\) expressions \( (E_1, \ldots, E_n) \).
Two tildes \((i, j)\) and \((i', j')\) are \emph{overlapping} if \( \{i, \ldots, j\} \cap \{i', \ldots, j'\} \neq \emptyset \).
A \emph{free subset} of \(S\) is a subset where no tildes overlap each other.
As far as \(S= {(i_k, j_k)}_{k \leq m} \) is free, the action of a tilde is to add the empty word in the catenation of the languages denoted by the expression it overlaps in the catenation of all the denoted languages, \emph{i.e.}
\begin{multline*}
  L(\mathrm{MT}_S(E_1, \ldots, E_n)) =
  L(E_1) \cdot \cdots \cdot L(E_{{i_1}-1}) \cdot (L(E_{i_1}) \cdot \cdots \cdot L(E_{j_1}) \cup \{\varepsilon\}) \cdot L(E_{j_1+1}) \cdot \cdots \\
  \cdots L(E_{i_m - 1}) \cdot  (L(E_{i_m}) \cdot \cdots \cdot L(E_{j_m}) \cup \{\varepsilon\}) \cdot
  L(E_{j_m+ 1} ) \cdots L(E_n).
\end{multline*}
Inductively extended with multi-tildes operators, regular expressions with \(n\) symbols can be turned into equivalent
automata with at most \(n\) states, using the position automaton~\cite{CCM11} or the partial derivation one~\cite{CCM12}.

In the following, we show how to extend the notion of tildes from unions of free subsets to any Boolean combinations of tildes.

\section{Boolean Formulae and Satisfiability}\label{sec Bool}

A \emph{Boolean formula} \(\phi \) over an alphabet \( \Gamma \) is inductively defined by
\begin{align*}
  \phi & = a, & \phi & = o(\phi_1, \ldots, \phi_n),
\end{align*}
where \(a\) is an \emph{atom} in \( \Gamma \), \(o\) is an \emph{operator} associated with an \(n\)-ary function
\( o_f \) from \( \mathbb{B}^n \) to \( \mathbb{B} \),
and \( \phi_1, \ldots, \phi_n\) are \(n\) Boolean formulae over \( \Gamma \).

As an example, \( \neg \) is the operator associated with the Boolean negation,
\( \wedge \) with the Boolean conjunction and
\( \vee \) with the Boolean disjunction.
We denote by \( \bot \) the constant (\(0\)-ary function) \(0\)
and by \( \top \) the constant \(1\).

Let \( \phi \) be a Boolean formula over an alphabet \( \Gamma \).
A function \( i \) from \( \Gamma \) to \( \mathbb{B} \) is said to be an \emph{interpretation} (of \( \Gamma \)).
The \emph{evaluation} of \( \phi \) with respect to \( i \) is the Boolean \(\mathrm{eval}_i(\phi)\) inductively defined by
\begin{align*}
  \mathrm{eval}_i(a) & = i(a), & \mathrm{eval}_i(o(\phi_1, \ldots, \phi_n)) & = o_f(\mathrm{eval}_i(\phi_1), \ldots, \mathrm{eval}_i(\phi_n)),
\end{align*}
where \(a\) an atom in \( \Gamma \), \(o\) is an operator associated with an \(n\)-ary function
\( o_f \) from \( \mathbb{B}^n \) to \( \mathbb{B} \),
and \( \phi_1, \ldots, \phi_n\) are \(n\) Boolean formulae over \( \Gamma \).
Non-classical Boolean functions can also be considered, like in the following example.
\begin{example}\label{ex mirror}
  The operator \( \mathrm{Mirror_n} \) is associated with the \((2 \times n)\)-ary Boolean function \(f\) defined
  for any \((2 \times n)\) Boolean \((b_1, \ldots, b_{2n})\) by
  \begin{align*}
    f(b_1, \ldots, b_{2n}) & \Leftrightarrow (b_1,\ldots, b_n) = (b_{2n}, \ldots, b_{n + 1})
    \Leftrightarrow (b_1 = b_{2n}) \wedge \cdots \wedge (b_n = b_{n + 1})                                                                                                           \\
                           & \Leftrightarrow (b_1 \wedge b_{2n} \vee \neg b_1 \wedge \neg b_{2n}) \wedge \cdots  \wedge (b_n \wedge b_{n + 1} \vee \neg b_n \wedge \neg b_{n + 1}).
  \end{align*}
\end{example}

A Boolean formula is said to be:
\emph{satisfiable} if there exists an interpretation leading to a positive evaluation;
\emph{a tautology} if every interpretation leads to a positive evaluation;
\emph{a contradiction} if it is not satisfiable.

Even if it is an NP-Hard problem~\cite{Coo71}, checking the satisfiability of a Boolean formula can be performed by using incremental algorithms~\cite{DP60,DLL62,Qui82}.
The following method can be performed:
If there is no atom in the formula, then it can be reduced to either \( \bot \) or \( \top \), and it is respectively a tautology or a contradiction;
Otherwise, choose an atom \(a\), replace it with \( \bot \) (denoted by \(a \coloneqq \bot\)), reduce and recursively reapply the method;
If it is not satisfiable, replace \(a\) with \( \top \) (denoted by \(a \coloneqq \top\)), reduce and recursively reapply the method.
The reduction step can be performed by recursively simplifying the subformulae of the form \(o(\phi_1, \ldots, \phi_n)\)
such that there exists \( k\leq n \) satisfying \(F_k \in \{\bot, \top \} \).
As an example, the satisfiability of \( \neg(a \wedge b) \wedge (a \wedge c) \) can be checked as shown in Figure~\ref{fiq ex Quine}.
\begin{figure}[H]
  \begin{center}
    \begin{tikzpicture}[transform shape, scale = 0.8]
      \node (F) [rectangle,draw, rounded corners]  {\( \neg(a \wedge b) \wedge (a \wedge c) \)} ;

      \node (Fabot) [rectangle,draw, rounded corners, below left = of F] {
        \begin{tikzpicture}
          \matrix[matrix of math nodes, nodes={inner sep = 0mm, outer sep = 0mm}, row sep = 1mm] {
            \neg & ( \rev{\bot \wedge b} ) & \wedge       & ( \rev{\bot \wedge c} ) \\
                 & \rev{\neg \bot}         & \wedge       & \bot                    \\
                 & \rev{\top}              & \rev{\wedge} & \rev{\bot}              \\
                 &                         & \bot                                   \\
          };
        \end{tikzpicture}
      };

      \node (Fatop) [rectangle, draw, rounded corners, below right = of F] {
        \begin{tikzpicture}
          \matrix[matrix of math nodes, nodes={inner sep = 0mm, outer sep = 0mm}, row sep = 1mm] {
            \neg & ( \rev{\top \wedge b} ) & \wedge & ( \rev{\top \wedge c} ) \\
                 & \neg b                  & \wedge & c                       \\
          };
        \end{tikzpicture}
      };

      \node (Fbbop) [rectangle, draw, rounded corners, below left = of Fatop] {
        \begin{tikzpicture}
          \matrix[matrix of math nodes, nodes={inner sep = 0mm, outer sep = 0mm}, row sep = 1mm] {
            \rev{\neg \bot} & [1mm] \wedge & [1mm] c \\
            \rev{\top}      & \rev{\wedge} & \rev{c} \\
                            & c                      \\
          };
        \end{tikzpicture}
      };

      \node (Fcbop) [rectangle,draw, rounded corners, below left = of Fbbop] {\( \bot \)} ;
      \node (Fctop) [rectangle,draw, rounded corners, below right = of Fbbop] {\( \top \)} ;

      \draw  (F) -| (Fabot) node [above, near start] {\( a \coloneqq \bot \)};
      \draw (F) -| (Fatop) node [above, near start] {\( a \coloneqq \top \)};
      \draw (Fatop) -| (Fbbop) node [above, near start] {\( b \coloneqq \bot \)};
      \draw (Fbbop) -| (Fcbop)node [above, near start] {\( c \coloneqq \bot \)};
      \draw (Fbbop) -| (Fctop) node [above, near start] {\( c \coloneqq \top \)};

    \end{tikzpicture}
  \end{center}
  \caption{The formula \( \neg(a \wedge b) \wedge (a \wedge c) \) is satisfiable.}%
  \label{fiq ex Quine}
\end{figure}
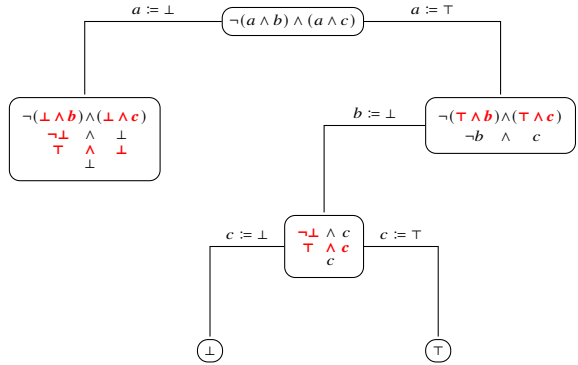

Two Boolean formulae \(\phi\) and \(\phi'\) are \emph{equivalent}, denoted by \(\phi \sim \phi'\),  if for any interpretations \(i\),
\(\mathrm{eval}_i(\phi) = \mathrm{eval}_i(\phi')\).
\begin{example}\label{ex regl simplif mirror}
  Let us consider the operator \(\mathrm{Mirror}_n\) defined in Example~\ref{ex mirror}.
  It can be shown, following the equation in Example~\ref{ex mirror}, that for any \((2n - 1)\) Boolean formulae
  \((\phi_1, \ldots, \phi_{2n - 1})\),
  \begin{align*}
    \mathrm{Mirror}_n(\bot, \phi_1, \ldots, \phi_{2n - 1}) & \sim \mathrm{Mirror}_{n - 1}(\phi_1, \ldots, \phi_{2n - 2}) \wedge \neg \phi_{2n - 1}, \\
    \mathrm{Mirror}_n(\top, \phi_1, \ldots, \phi_{2n - 1}) & \sim \mathrm{Mirror}_{n - 1}(\phi_1, \ldots, \phi_{n - 2}) \wedge \phi_{2n - 1},       \\
    \mathrm{Mirror}_n()                                    & \sim \top.
  \end{align*}
\end{example}

For any two Boolean formulae \(\phi\) and \(\phi'\) and for any atom \(a\), we denote by
\(\phi_{a \coloneqq \phi'}\) the formula obtained by replacing any occurrence of \(a\) in \(\phi\)
with \(\phi'\).
For any two sequences \((\phi'_1, \ldots, \phi'_n)\) of Boolean formulae and \((a_1, \ldots, a_n)\) of distinct atoms, we denote by \(\phi_{a_1 \coloneqq \phi'_1, \ldots, a_n \coloneqq \phi'_n}\) the formula obtained by replacing any occurrence of \(a_k\) in \(\phi\)
with \(\phi'_k\) for any \( 1 \leq k \leq n\).

It is well known that for any Boolean formula \(\phi\) and for any atom \(a\) in \(\phi\),
\begin{equation}\label{eq equiv form}
  \phi \sim \neg a \wedge \phi_{a \coloneqq \bot} \vee a \wedge \phi_{a \coloneqq \top}.
\end{equation}

\section{Constrained Multi-Tildes}\label{sec def cons tildes}

Multi-Tildes operators define languages by computing free sublists of tildes from a set of couples.
This can be viewed as a particular disjunctive combination of these tildes, since sublists of a
free list \(\ell \) define languages that are included in the one \(\ell \) defines.
This disjunctive interpretation can be extended to any Boolean combination.
One may choose to apply conjunctive sequences of not contiguous tildes, or may choose to exclude
some combinations of free tildes.
In this section, we show how to model this interpretation using Boolean formulae.


The action of a tilde is to add the empty word in the catenation of the languages it overhangs.
If the tilde is considered as an interval of contiguous positions \((p_1, p_2, \ldots, p_k)\),
its action can be seen as the conjunction of the substitution of each language at position \( p_1 \),
position \( p_2\), \emph{etc.} with \( \{\varepsilon\} \).

In fact, each position can be considered as an atom of a Boolean formula \( \phi \).
For any interpretation \( i\) leading to a positive evaluation of \( \phi \), we can use \(i(k) \) to
determine whether the language \( L_k \) can be replaced by \( \{\varepsilon\} \) in \( L_1 \cdot \cdots \cdot L_n \).
Let us formalize these thoughts as follows.

Let \( i \) be an interpretation over \( \{1, \ldots, n\} \).
Let \(L_1, \ldots, L_n\) be \( n \) languages.
We denote by \( i(L_1, \ldots, L_n) \) the language \( L'_1 \cdot \cdots \cdot L'_n \) where
\(
L'_k =
\begin{cases}
  \{\varepsilon \} & \text{ if } i(k),  \\
  L_k              & \text{ otherwise.}
\end{cases}
\)

Let \(\phi \) be a Boolean formula over the alphabet \( \{1, \ldots, n\} \) and
\(L_1, \ldots, L_n\) be \(n\) languages.
We denote by \( \phi(L_1, \ldots, L_n) \) the language
\begin{equation}\label{def language}
  \bigcup_{i \mid \mathrm{eval}_i(\phi)}i(L_1, \ldots, L_n).
\end{equation}

\begin{example}\label{ex lang mirror}
  Let us consider the operator \(\mathrm{Mirror}_n\) defined in Example~\ref{ex mirror}
  and the two alphabets \(\Gamma_n = \{1, \ldots, 2n\} \) and \(\Sigma_n = \{a_1, \ldots, a_{2n}\}\).
  Then:
  \begin{multline*}
    \mathrm{Mirror}_n(1, \ldots, 2n)(\{a_1\}, \ldots, \{a_{2n}\}) \\
    = \{w_1 \cdots w_{2n} \mid
    \forall k \leq 2n, w_k \in \{\varepsilon, a_k\}
    \wedge (w_k = \varepsilon \Leftrightarrow w_{2n - k + 1} = \varepsilon) \} \\
    = \{a_1\cdots a_{2_n}, a_1a_3a_4\cdots a_{2n-3}a_{2n-2}a_{2n}, \ldots, a_1a_{2n}, \ldots, a_n a_{n+1}, \varepsilon \}.
  \end{multline*}
\end{example}

First, we remark that the action of constrained tildes preserves regularity,
since it is a finite union of catenations of regular languages, following Equation~\eqref{def language}.
\begin{theorem}
  Let \(\phi \) be a Boolean formula over the alphabet \( \{1, \ldots, n\} \) and
  \(L_1, \ldots, L_n\) be \(n\) regular languages.
  Then
  \(\phi(L_1, \ldots, L_n)\) is regular.
\end{theorem}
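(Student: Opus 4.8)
The plan is to unfold Equation~\eqref{def language} and exploit the fact that the index set of the union is finite. The set of interpretations of \(\{1, \ldots, n\}\) is \(\{1,\ldots,n\} \rightarrow \mathbb{B}\), which has exactly \(2^n\) elements; a fortiori the set \(\{i \mid \mathrm{eval}_i(\phi)\}\) of interpretations satisfying \(\phi\) is finite. Hence \(\phi(L_1, \ldots, L_n)\) is a \emph{finite} union of languages of the form \(i(L_1, \ldots, L_n)\), and it suffices to prove each such language regular.

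So fix an interpretation \(i\) with \(\mathrm{eval}_i(\phi)\). By definition \(i(L_1, \ldots, L_n) = L'_1 \cdots L'_n\) where each factor \(L'_k\) equals either \(\{\varepsilon\}\) or \(L_k\). Since \(\{\varepsilon\}\) is regular (denoted by \(\varepsilon\)) and each \(L_k\) is regular by hypothesis, every \(L'_k\) is regular, and the catenation of finitely many regular languages is regular by the standard closure of the regular languages under product; hence \(i(L_1, \ldots, L_n)\) is regular. Finally, the regular languages are closed under finite union, so \(\bigcup_{i \mid \mathrm{eval}_i(\phi)} i(L_1,\ldots,L_n)\) is regular — and when \(\phi\) is a contradiction this union is empty, hence still regular (denoted by \(\emptyset\)). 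That finishes the proof.

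There is essentially no obstacle: the statement follows directly from the finiteness of the interpretation space combined with the closure of regular languages under product and finite union, which is exactly why the authors present it as an immediate remark. (One could alternatively argue inductively on the number of atoms of \(\phi\) using the equivalence in Equation~\eqref{eq equiv form}, but this buys nothing here.) The only point worth flagging is that this argument is wasteful about the size of the resulting automaton — the bound it yields is exponential in \(n\) — whereas the genuinely interesting content, a construction with a number of states linear in the size of the expression, is what the subsequent sections on derived-term and position automata are designed to provide.
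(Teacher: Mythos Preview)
Your proof is correct and is exactly the paper's own justification: the one-line remark preceding the theorem already observes that \(\phi(L_1,\ldots,L_n)\) is a finite union of catenations of regular languages by Equation~\eqref{def language}. Your closing aside is inaccurate, however: the derived-term and Glushkov constructions in the later sections are \emph{not} linear-size for extended expressions in general --- Section~\ref{sec fact pow} proves that some extended expressions require exponentially many states in any equivalent NFA.
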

Moreover, this definition also allows us to explicit some remarkable identities.
As an example, considering \( n \) languages
\((L_1, \ldots, L_n) \) and a Boolean formula \( \phi \) over the alphabet \( \{1, \ldots, n\} \),
it can be shown that the two following identities hold:
\begin{enumerate}
  \item if \( \phi \) is a contradiction, then
        \(
        \phi (L_1, \ldots, L_n) = \emptyset
        \);
  \item if \( \phi \) is a tautology, then
        \(
        \phi (L_1, \ldots, L_n) = (L_1 \cup \{\varepsilon \}) \cdot \cdots \cdot (L_n \cup \{\varepsilon \})
        \).
\end{enumerate}
Some properties of Boolean formulae can also be transferred while acting over language sequences, as direct consequences of Equation~\eqref{def language}.
\begin{lemma}\label{lem eq form bool}
  Let \(\phi_1 \) and \(\phi_2 \) be two equivalent Boolean formulae over the alphabet \( \{1, \ldots, n\} \) and
  \(L_1, \ldots, L_n\) be \(n\) languages.
  Then
  \(\phi_1(L_1, \ldots, L_n) = \phi_2(L_1, \ldots, L_n)\).
\end{lemma}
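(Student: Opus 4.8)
The plan is to unfold both sides of the claimed equality using the defining Equation~\eqref{def language} and observe that the only data that enter the union are the interpretations $i$ satisfying the formula. Concretely, by definition $\phi_1(L_1,\ldots,L_n) = \bigcup_{i \mid \mathrm{eval}_i(\phi_1)} i(L_1,\ldots,L_n)$ and $\phi_2(L_1,\ldots,L_n) = \bigcup_{i \mid \mathrm{eval}_i(\phi_2)} i(L_1,\ldots,L_n)$. Since $\phi_1 \sim \phi_2$ means exactly that $\mathrm{eval}_i(\phi_1) = \mathrm{eval}_i(\phi_2)$ for every interpretation $i$ of $\{1,\ldots,n\}$, the two index sets $\{i \mid \mathrm{eval}_i(\phi_1)\}$ and $\{i \mid \mathrm{eval}_i(\phi_2)\}$ coincide. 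Two unions indexed by the same set, with the same term $i(L_1,\ldots,L_n)$ attached to each index, are equal; hence the two languages are equal.

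First I would write the left-hand side as the union in \eqref{def language}. Then I would invoke the definition of $\sim$ to rewrite the condition $\mathrm{eval}_i(\phi_1)$ under the union as $\mathrm{eval}_i(\phi_2)$, which is legitimate precisely because the equivalence is quantified over \emph{all} interpretations and the union ranges over interpretations. This turns the left-hand side into the union defining $\phi_2(L_1,\ldots,L_n)$, which is the right-hand side.

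There is essentially no obstacle here: the statement is a direct consequence of Equation~\eqref{def language} together with the definition of Boolean equivalence, exactly as announced in the sentence preceding the lemma. The only point worth making explicit is that the term $i(L_1,\ldots,L_n)$ attached to each interpretation $i$ depends solely on $i$ and on the fixed sequence $L_1,\ldots,L_n$, and not on which formula $\phi_1$ or $\phi_2$ we started from; so once the index sets are shown to agree, the unions agree termwise and therefore as sets. No case analysis on the structure of $\phi_1$ or $\phi_2$, and no induction, is required.
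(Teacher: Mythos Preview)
Your proposal is correct and matches the paper's own approach: the paper does not spell out a proof but states that the lemma is a direct consequence of Equation~\eqref{def language}, which is exactly what you do by unfolding the defining union and using that equivalent formulae have the same satisfying interpretations.
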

\begin{lemma}\label{lem simplif disj}
  Let \(\phi_1 \) and \(\phi_2 \) be two Boolean formulae over \( \{1, \ldots, n\} \) and
  \(L_1, \ldots, L_n\) be \(n\) languages.
  Then
  \((\phi_1 \vee \phi_2)(L_1, \ldots, L_n) = \phi_1(L_1, \ldots, L_n) \cup \phi_2(L_1, \ldots, L_n)\).
\end{lemma}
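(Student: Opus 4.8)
The plan is to unfold both sides of the claimed equality using the definition of the language associated with a Boolean formula, Equation~\eqref{def language}, and then reduce everything to a statement about the sets of satisfying interpretations. First I would write
\[
  (\phi_1 \vee \phi_2)(L_1, \ldots, L_n) = \bigcup_{i \mid \mathrm{eval}_i(\phi_1 \vee \phi_2)} i(L_1, \ldots, L_n),
\]
and observe that, since \( \vee \) is associated with Boolean disjunction, \(\mathrm{eval}_i(\phi_1 \vee \phi_2) = \mathrm{eval}_i(\phi_1) \vee \mathrm{eval}_i(\phi_2)\) by the definition of \(\mathrm{eval}_i\). Hence the set of interpretations \( i \) with \(\mathrm{eval}_i(\phi_1 \vee \phi_2)\) is exactly the union of the set of interpretations satisfying \( \phi_1 \) and the set satisfying \( \phi_2 \).

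Next I would use the fact that a union indexed over a union of index sets splits as a union of the two restricted unions: for any family \((X_i)\) of sets and index sets \(A, B\), \(\bigcup_{i \in A \cup B} X_i = \bigcup_{i \in A} X_i \cup \bigcup_{i \in B} X_i\). Applying this with \(X_i = i(L_1, \ldots, L_n)\), \(A = \{i \mid \mathrm{eval}_i(\phi_1)\}\) and \(B = \{i \mid \mathrm{eval}_i(\phi_2)\}\) gives
\[
  \bigcup_{i \mid \mathrm{eval}_i(\phi_1 \vee \phi_2)} i(L_1, \ldots, L_n) = \bigcup_{i \mid \mathrm{eval}_i(\phi_1)} i(L_1, \ldots, L_n) \;\cup\; \bigcup_{i \mid \mathrm{eval}_i(\phi_2)} i(L_1, \ldots, L_n),
\]
and the two unions on the right are, again by Equation~\eqref{def language}, precisely \(\phi_1(L_1, \ldots, L_n)\) and \(\phi_2(L_1, \ldots, L_n)\), which concludes the argument.

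There is no real obstacle here: the lemma is essentially the observation that the construction \(\phi \mapsto \phi(L_1,\ldots,L_n)\) transports disjunction of formulae to union of languages, and every step is a direct consequence of the definitions of \(\mathrm{eval}_i\) and of Equation~\eqref{def language} together with the trivial set-theoretic identity for unions over a union of index sets. The only point worth stating carefully is that the index sets for \(\phi_1\) and \(\phi_2\) need not be disjoint, but since the identity \(\bigcup_{i \in A \cup B} X_i = \bigcup_{i \in A} X_i \cup \bigcup_{i \in B} X_i\) holds whether or not \(A\) and \(B\) overlap, no disjointness hypothesis is needed. (One could also phrase the proof as a double inclusion, but the equational chain above is cleaner.)
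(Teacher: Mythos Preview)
Your proof is correct and follows exactly the approach the paper intends: the paper does not spell out a proof for this lemma but presents it as a ``direct consequence of Equation~\eqref{def language}'', which is precisely what you unfold. Your argument is a faithful elaboration of that one-line justification.
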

\begin{lemma}\label{lem simplif conj}
  Let \(\phi \) be a Boolean formula over the alphabet \( \{2, \ldots, n\} \) and
  \(L_1, \ldots, L_n\) be \(n\) languages.
  Then
  \begin{align*}
    (1 \wedge \phi)(L_1, \ldots, L_n)      & = \phi_{2  \coloneqq 1, \ldots, n \coloneqq n - 1}(L_2, \ldots, L_n),           \\
    (\neg 1 \wedge \phi)(L_1, \ldots, L_n) & = L_1 \cdot \phi_{2  \coloneqq 1, \ldots, n \coloneqq n - 1}(L_2, \ldots, L_n).
  \end{align*}
\end{lemma}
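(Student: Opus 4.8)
The plan is to unfold both sides using the definition of $\phi(L_1,\ldots,L_n)$ in Equation~\eqref{def language} and exhibit a bijection between the interpretations that contribute to the left-hand union and those that contribute to the right-hand union. For the first identity, consider interpretations $i$ over $\{1,\ldots,n\}$ with $\mathrm{eval}_i(1\wedge\phi)=1$. Since the operator is conjunction, $\mathrm{eval}_i(1\wedge\phi)=1$ forces $i(1)=1$ and $\mathrm{eval}_i(\phi)=1$. When $i(1)=1$, the language $i(L_1,\ldots,L_n)$ equals $\{\varepsilon\}\cdot L'_2\cdots L'_n = L'_2\cdots L'_n$, where $L'_k=\{\varepsilon\}$ if $i(k)$ and $L'_k=L_k$ otherwise, so the leading factor $L_1$ is erased and only the behaviour on positions $2,\ldots,n$ matters.

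The key step is the matching of interpretations. Define, for each interpretation $i$ over $\{1,\ldots,n\}$ with $i(1)=1$, the interpretation $j$ over $\{1,\ldots,n-1\}$ by $j(k)=i(k+1)$; this is a bijection between $\{i : i(1)=1\}$ and the set of all interpretations over $\{1,\ldots,n-1\}$. Because $\phi$ is a formula over $\{2,\ldots,n\}$, a routine induction on the structure of $\phi$ gives $\mathrm{eval}_i(\phi)=\mathrm{eval}_j(\phi_{2\coloneqq 1,\ldots,n\coloneqq n-1})$, and simultaneously $i(L_1,\ldots,L_n)=j(L_2,\ldots,L_n)$ under this renaming. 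Summing over all contributing interpretations then yields $(1\wedge\phi)(L_1,\ldots,L_n)=\phi_{2\coloneqq 1,\ldots,n\coloneqq n-1}(L_2,\ldots,L_n)$, which is the first equation.

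For the second identity the argument is the same except that $\mathrm{eval}_i(\neg 1\wedge\phi)=1$ forces $i(1)=0$ and $\mathrm{eval}_i(\phi)=1$; now $i(L_1,\ldots,L_n)=L_1\cdot L'_2\cdots L'_n$, so each contributing term has $L_1$ as a common left factor. Factoring $L_1$ out of the union (catenation distributes over union of languages) and applying the same bijection $j(k)=i(k+1)$ — this time between $\{i: i(1)=0\}$ and all interpretations over $\{1,\ldots,n-1\}$ — gives $(\neg 1\wedge\phi)(L_1,\ldots,L_n)=L_1\cdot\phi_{2\coloneqq 1,\ldots,n\coloneqq n-1}(L_2,\ldots,L_n)$.

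The only mildly delicate point, and the one I would write out carefully, is the inductive claim $\mathrm{eval}_i(\phi)=\mathrm{eval}_j(\phi_{2\coloneqq 1,\ldots,n\coloneqq n-1})$: it needs the hypothesis that $\phi$'s atoms lie in $\{2,\ldots,n\}$ so that the substitution is well defined and the renaming $j(k)=i(k+1)$ is compatible with $\mathrm{eval}$ at the leaves, after which the operator case is immediate from the definition of $\mathrm{eval}_i$. Everything else is bookkeeping with Equation~\eqref{def language} and the distributivity of catenation over union.
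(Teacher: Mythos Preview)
Your argument is correct and is essentially the paper's approach spelled out in full: the paper states this lemma as a direct consequence of Equation~\eqref{def language}, and what you have written is precisely the unfolding of that definition together with the obvious bijection between interpretations over \(\{1,\ldots,n\}\) with a fixed value at~\(1\) and interpretations over \(\{1,\ldots,n-1\}\). There is nothing to add.
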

As a consequence of Equation~\eqref{eq equiv form}, Lemma~\ref{lem eq form bool}, Lemma~\ref{lem simplif disj} and Lemma~\ref{lem simplif conj}, it holds:
\begin{proposition}\label{prop simplif tildes lang}
  Let \(\phi \) be a Boolean formula over the alphabet \( \{1, \ldots, n\} \) and
  \(L_1, \ldots, L_n\) be \(n\) languages.
  Then
  \begin{align*}
    \phi(L_1, \ldots, L_n) & = L_1 \cdot \phi'(L_2, \ldots, L_n) \cup \phi''(L_2, \ldots, L_n),   \\
    \text{where}\quad
    \phi'                  & = \phi_{1 \coloneqq \bot, 2  \coloneqq 1, \ldots, n \coloneqq n - 1}
    \quad \text{and} \quad
    \phi''                  = \phi_{1 \coloneqq \top, 2  \coloneqq 1, \ldots, n \coloneqq n - 1}.
  \end{align*}
\end{proposition}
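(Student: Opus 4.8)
The plan is to peel off the first position by applying the well-known expansion~\eqref{eq equiv form} to the atom $1$, and then to recognise the two resulting conjunctive terms as instances of Lemma~\ref{lem simplif conj}. Concretely, instantiating~\eqref{eq equiv form} with $a = 1$ gives $\phi \sim \neg 1 \wedge \phi_{1 \coloneqq \bot} \vee 1 \wedge \phi_{1 \coloneqq \top}$; this equivalence is valid whether or not $1$ actually occurs in $\phi$ --- if it does not, both substituted formulae equal $\phi$ and the right-hand side collapses back to $\phi$ --- so no case distinction on the presence of the atom $1$ is needed.

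The second step transports this equivalence to languages. By Lemma~\ref{lem eq form bool}, $\phi(L_1, \ldots, L_n) = (\neg 1 \wedge \phi_{1 \coloneqq \bot} \vee 1 \wedge \phi_{1 \coloneqq \top})(L_1, \ldots, L_n)$, and by Lemma~\ref{lem simplif disj} the right-hand side splits as the union $(\neg 1 \wedge \phi_{1 \coloneqq \bot})(L_1, \ldots, L_n) \cup (1 \wedge \phi_{1 \coloneqq \top})(L_1, \ldots, L_n)$. The key observation is that neither $\phi_{1 \coloneqq \bot}$ nor $\phi_{1 \coloneqq \top}$ contains the atom $1$ any more, so each of them is a Boolean formula over $\{2, \ldots, n\}$ and Lemma~\ref{lem simplif conj} applies to it verbatim. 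That lemma then yields $(\neg 1 \wedge \phi_{1 \coloneqq \bot})(L_1, \ldots, L_n) = L_1 \cdot (\phi_{1 \coloneqq \bot})_{2 \coloneqq 1, \ldots, n \coloneqq n - 1}(L_2, \ldots, L_n)$ and $(1 \wedge \phi_{1 \coloneqq \top})(L_1, \ldots, L_n) = (\phi_{1 \coloneqq \top})_{2 \coloneqq 1, \ldots, n \coloneqq n - 1}(L_2, \ldots, L_n)$.

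The final step is to collapse the nested substitutions. Since the atom $1$ does not occur in $\phi_{1 \coloneqq \bot}$ (and likewise in $\phi_{1 \coloneqq \top}$), performing $1 \coloneqq \bot$ and then the re-indexing $2 \coloneqq 1, \ldots, n \coloneqq n - 1$ produces exactly the simultaneous substitution of the proposition, i.e.\ $(\phi_{1 \coloneqq \bot})_{2 \coloneqq 1, \ldots, n \coloneqq n - 1} = \phi_{1 \coloneqq \bot, 2 \coloneqq 1, \ldots, n \coloneqq n - 1} = \phi'$ and similarly $(\phi_{1 \coloneqq \top})_{2 \coloneqq 1, \ldots, n \coloneqq n - 1} = \phi''$; this is a routine unfolding of the definition of substitution, the point being that no atom of the intermediate formula clashes with a source or a target of the re-indexing. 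Substituting these back into the union gives $\phi(L_1, \ldots, L_n) = L_1 \cdot \phi'(L_2, \ldots, L_n) \cup \phi''(L_2, \ldots, L_n)$, which is the claim.

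There is no genuine obstacle here: the statement is, by design, the composition of Equation~\eqref{eq equiv form} with Lemmas~\ref{lem eq form bool}, \ref{lem simplif disj} and~\ref{lem simplif conj}. The only points demanding a little care are bookkeeping ones --- checking that the substituted formulae $\phi_{1 \coloneqq \bot}$ and $\phi_{1 \coloneqq \top}$ legitimately count as formulae over the reduced alphabet $\{2, \ldots, n\}$ so that Lemma~\ref{lem simplif conj} is applicable, and verifying that the sequential substitutions produced by that lemma amalgamate into the simultaneous substitutions $\phi'$ and $\phi''$ named in the statement.
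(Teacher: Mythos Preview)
Your proof is correct and follows exactly the approach the paper indicates: the paper states the proposition as a direct consequence of Equation~\eqref{eq equiv form}, Lemma~\ref{lem eq form bool}, Lemma~\ref{lem simplif disj} and Lemma~\ref{lem simplif conj}, and you have spelled out precisely how these four ingredients combine, including the bookkeeping on substitutions that the paper leaves implicit.
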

\begin{proposition}\label{prop simplif tildes lang2}
  Let \(\phi \) be a Boolean formula over the alphabet \( \{1, \ldots, n\} \) and
  \(L_1, \ldots, L_n\) be \(n\) languages.
  Then
  \begin{align*}
    \phi(L_1, \ldots, L_n) & = \phi'(L_1, \ldots, L_{n-1})\cdot L_n \cup \phi''(L_1, \ldots, L_{n-1}), \\
    \text{where}\quad
    \phi'                  & = \phi_{n \coloneqq \bot}
    \quad \text{and} \quad
    \phi''                  = \phi_{n \coloneqq \top}.
  \end{align*}
\end{proposition}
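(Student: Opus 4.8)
The plan is to dualize the proof of Proposition~\ref{prop simplif tildes lang}, splitting on the last atom $n$ instead of on the first atom $1$. First I would establish the right-hand analogue of Lemma~\ref{lem simplif conj}: for any Boolean formula $\psi$ over $\{1, \ldots, n-1\}$ and any languages $L_1, \ldots, L_n$,
\begin{align*}
  (n \wedge \psi)(L_1, \ldots, L_n)      & = \psi(L_1, \ldots, L_{n-1}),            \\
  (\neg n \wedge \psi)(L_1, \ldots, L_n) & = \psi(L_1, \ldots, L_{n-1}) \cdot L_n.
\end{align*}
Both identities follow directly from Equation~\eqref{def language}, exactly as in the proof of Lemma~\ref{lem simplif conj} but read from the right: an interpretation $i$ of $\{1, \ldots, n\}$ satisfies $n \wedge \psi$ (resp.\ $\neg n \wedge \psi$) if and only if its restriction $i'$ to $\{1, \ldots, n-1\}$ satisfies $\psi$ and $i(n) = 1$ (resp.\ $i(n) = 0$); in the first case the factor contributed at position $n$ to $i(L_1, \ldots, L_n)$ is $\{\varepsilon\}$, so $i(L_1, \ldots, L_n) = i'(L_1, \ldots, L_{n-1})$, while in the second case that factor is $L_n$, so $i(L_1, \ldots, L_n) = i'(L_1, \ldots, L_{n-1}) \cdot L_n$; taking the union over all such $i$, equivalently over all interpretations $i'$ of $\{1, \ldots, n-1\}$ satisfying $\psi$, gives the two equalities.

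Next I would apply the Shannon expansion~\eqref{eq equiv form} at the atom $n$, obtaining $\phi \sim \neg n \wedge \phi' \vee n \wedge \phi''$ with $\phi' = \phi_{n \coloneqq \bot}$ and $\phi'' = \phi_{n \coloneqq \top}$. Here it is worth noting that $\phi'$ and $\phi''$ are Boolean formulae over $\{1, \ldots, n-1\}$, since the substitution removes every occurrence of $n$ and leaves the other atoms untouched; in particular no reindexing is needed, which makes this proof slightly lighter than that of Proposition~\ref{prop simplif tildes lang}. Then Lemma~\ref{lem eq form bool} gives $\phi(L_1, \ldots, L_n) = (\neg n \wedge \phi' \vee n \wedge \phi'')(L_1, \ldots, L_n)$, Lemma~\ref{lem simplif disj} rewrites the right-hand side as $(\neg n \wedge \phi')(L_1, \ldots, L_n) \cup (n \wedge \phi'')(L_1, \ldots, L_n)$, and the two identities above turn this into $\phi'(L_1, \ldots, L_{n-1}) \cdot L_n \cup \phi''(L_1, \ldots, L_{n-1})$, which is exactly the claimed equality.

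I do not expect a genuine obstacle here: the argument is a direct mirror of the proof of Proposition~\ref{prop simplif tildes lang}. The only point requiring a little care is the bookkeeping --- checking that $\phi_{n \coloneqq \bot}$ and $\phi_{n \coloneqq \top}$ really are formulae over $\{1, \ldots, n-1\}$, so that evaluating them over $L_1, \ldots, L_{n-1}$ is meaningful and Lemma~\ref{lem simplif disj} together with the right-hand analogue of Lemma~\ref{lem simplif conj} apply verbatim --- and, if one prefers to reuse Lemma~\ref{lem simplif conj} as stated rather than re-deriving its mirror, commuting the conjunctions via $\neg n \wedge \phi' \sim \phi' \wedge \neg n$ and $n \wedge \phi'' \sim \phi'' \wedge n$ before invoking it.
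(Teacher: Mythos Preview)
Your proposal is correct and follows essentially the same approach as the paper, which presents Proposition~\ref{prop simplif tildes lang2} (like Proposition~\ref{prop simplif tildes lang}) as a consequence of Equation~\eqref{eq equiv form}, Lemma~\ref{lem eq form bool}, Lemma~\ref{lem simplif disj} and Lemma~\ref{lem simplif conj} without spelling out the details. Your observation that no reindexing of atoms is needed here (unlike in Proposition~\ref{prop simplif tildes lang}) is a nice clarification of why the mirror version is in fact slightly simpler.
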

\begin{example}
  Let us consider the language 
  of Example~\ref{ex lang mirror}:
  \begin{equation*}
    L_n = \mathrm{Mirror}_n(1, \ldots, 2n)(\{a_1\}, \ldots, \{a_{2n}\}).
  \end{equation*}
  Following Proposition~\ref{prop simplif tildes lang}, Proposition~\ref{prop simplif tildes lang2} and rules in Example~\ref{ex regl simplif mirror}, it holds:
  \begin{align*}
    L_n & = \{a_1\} \cdot \mathrm{Mirror}_{n-1}(1, \ldots, 2n-2)(\{a_2\}, \ldots, \{a_{2n-1}\}) \cdot \{a_{2n}\} \\
        & \qquad \cup \mathrm{Mirror}_{n-1}(1, \ldots, 2n-2)(\{a_2\}, \ldots, \{a_{2n-1}\}).
  \end{align*}
\end{example}

This first proposition allows us to show how to easily determine whether the empty word belongs to the action of a Boolean formula over a language sequence and how to compute the quotient of such a computation w.r.t.\ a symbol.
\begin{corollary}\label{cor eps in lang}
  Let \(\phi \) be a Boolean formula over \( \{1, \ldots, n\} \) and
  \(L_1, \ldots, L_n\) be \(n\) languages.
  Then:
  \begin{align*}
    \varepsilon \in \phi(L_1, \ldots, L_n) & \Leftrightarrow   \ \varepsilon \in L_1 \wedge \varepsilon \in \phi'(L_2, \ldots, L_n) \vee \varepsilon \in \phi''(L_2, \ldots, L_n) \\
    \text{where} \quad
    \phi'                                  & = \phi_{1 \coloneqq \bot, 2  \coloneqq 1, \ldots, n \coloneqq n - 1}
    \quad \text{and} \quad
    \phi''                                  = \phi_{1 \coloneqq \top, 2  \coloneqq 1, \ldots, n \coloneqq n - 1}.
  \end{align*}
\end{corollary}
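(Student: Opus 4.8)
The plan is to deduce Corollary~\ref{cor eps in lang} directly from Proposition~\ref{prop simplif tildes lang} together with elementary facts about when $\varepsilon$ belongs to a union or a concatenation of languages. First I would invoke Proposition~\ref{prop simplif tildes lang} to rewrite $\phi(L_1,\ldots,L_n)$ as
\[
  \phi(L_1,\ldots,L_n) = L_1 \cdot \phi'(L_2,\ldots,L_n) \cup \phi''(L_2,\ldots,L_n),
\]
with $\phi'$ and $\phi''$ as in the statement of that proposition (which are exactly the $\phi'$, $\phi''$ appearing in the corollary).

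Next I would apply the two routine set-theoretic observations: for any languages $A$ and $B$, $\varepsilon \in A \cup B \iff \varepsilon \in A \vee \varepsilon \in B$, and $\varepsilon \in A \cdot B \iff \varepsilon \in A \wedge \varepsilon \in B$. Applying the first to the displayed decomposition gives
\[
  \varepsilon \in \phi(L_1,\ldots,L_n) \iff \varepsilon \in L_1 \cdot \phi'(L_2,\ldots,L_n) \vee \varepsilon \in \phi''(L_2,\ldots,L_n),
\]
and applying the second to the left disjunct replaces $\varepsilon \in L_1 \cdot \phi'(L_2,\ldots,L_n)$ by $\varepsilon \in L_1 \wedge \varepsilon \in \phi'(L_2,\ldots,L_n)$, which is precisely the claimed equivalence.

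There is essentially no obstacle here: the corollary is a one-line consequence of the already-established Proposition~\ref{prop simplif tildes lang}. The only point deserving a word of care is that the concatenation identity $\varepsilon \in A \cdot B \iff \varepsilon \in A \wedge \varepsilon \in B$ is used in the direction that requires noting that a word in $A \cdot B$ equals $\varepsilon$ only if both of its factors are $\varepsilon$; this is immediate from the definition of concatenation of languages and needs no induction. Since the substitutions defining $\phi'$ and $\phi''$ are identical to those in Proposition~\ref{prop simplif tildes lang}, no renaming bookkeeping is needed, and the proof is complete.
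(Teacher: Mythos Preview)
Your proof is correct and follows exactly the approach the paper intends: the corollary is stated immediately after Proposition~\ref{prop simplif tildes lang} as a direct consequence of it, and your derivation via the elementary facts \(\varepsilon \in A \cup B \iff \varepsilon \in A \vee \varepsilon \in B\) and \(\varepsilon \in A \cdot B \iff \varepsilon \in A \wedge \varepsilon \in B\) is precisely what is expected.
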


\begin{corollary}\label{cor quot tilde}
  Let \(\phi \) be a Boolean formula over the alphabet \( \{1, \ldots, n\} \),
  \(L_1, \ldots, L_n\) be \(n\) languages and \(a\)  be a symbol.
  Then:
  \begin{align*}
    a^{-1}(\phi(L_1, \ldots, L_n)) & =   \ a^{-1}(L_1) \cdot \phi'(L_2, \ldots, L_n)                        \\
                                   & \qquad \cup (a^{-1}(\phi'(L_2, \ldots, L_n)) \mid \varepsilon \in L_1) \\
                                   & \qquad \cup a^{-1}(\phi''(L_2, \ldots, L_n)),                          \\
    \text{where} \quad
    \phi'                          & = \phi_{1 \coloneqq \bot, 2  \coloneqq 1, \ldots, n \coloneqq n - 1}
    \quad \text{and} \quad
    \phi''  = \phi_{1 \coloneqq \top, 2  \coloneqq 1, \ldots, n \coloneqq n - 1}.
  \end{align*}
\end{corollary}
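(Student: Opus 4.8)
The plan is to derive the quotient formula directly from Proposition~\ref{prop simplif tildes lang} by applying the standard rules for left quotients of a union and of a catenation. First I would recall these two well-known facts: for any languages $K, M$ and any symbol $a$, we have $a^{-1}(K \cup M) = a^{-1}(K) \cup a^{-1}(M)$, and $a^{-1}(K \cdot M) = a^{-1}(K) \cdot M \cup (a^{-1}(M) \mid \varepsilon \in K)$. Both are immediate from the definition of the quotient, $a^{-1}(L) = \{w \mid aw \in L\}$.

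Next I would invoke Proposition~\ref{prop simplif tildes lang}, which gives
\[
  \phi(L_1, \ldots, L_n) = L_1 \cdot \phi'(L_2, \ldots, L_n) \cup \phi''(L_2, \ldots, L_n),
\]
with $\phi' = \phi_{1 \coloneqq \bot, 2 \coloneqq 1, \ldots, n \coloneqq n-1}$ and $\phi'' = \phi_{1 \coloneqq \top, 2 \coloneqq 1, \ldots, n \coloneqq n-1}$. Applying the quotient-of-union rule splits this into $a^{-1}(L_1 \cdot \phi'(L_2, \ldots, L_n)) \cup a^{-1}(\phi''(L_2, \ldots, L_n))$. Then applying the quotient-of-catenation rule to the first term yields $a^{-1}(L_1) \cdot \phi'(L_2, \ldots, L_n) \cup (a^{-1}(\phi'(L_2, \ldots, L_n)) \mid \varepsilon \in L_1)$. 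Combining these gives exactly the three-term union in the statement, so the corollary follows.

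There is no real obstacle here: the result is a purely formal consequence of Proposition~\ref{prop simplif tildes lang} together with the elementary quotient identities, and the only thing to be careful about is bookkeeping — making sure the substitution indices in $\phi'$ and $\phi''$ are carried over unchanged from the proposition and that the conditional term $(\,\cdot \mid \varepsilon \in L_1)$ is attached to the right subexpression. One could alternatively prove it from scratch using Equation~\eqref{def language}, writing $\phi(L_1, \ldots, L_n) = \bigcup_{i \mid \mathrm{eval}_i(\phi)} i(L_1, \ldots, L_n)$, splitting the union according to whether $i(1)$ holds, and computing the quotient termwise; but reusing Proposition~\ref{prop simplif tildes lang} is shorter and avoids re-deriving the case analysis on $i(1)$ that already underlies that proposition.
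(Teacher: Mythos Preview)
Your proposal is correct and matches the paper's intended approach: the paper presents this corollary as an immediate consequence of Proposition~\ref{prop simplif tildes lang}, and your derivation via the standard quotient rules for union and catenation is precisely how that consequence is realized.
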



Let us now extend classical regular expressions with the action of a Boolean formula
considered as a constrained Multi-Tildes.

An \emph{extended to constrained multi-tildes expression} \( E \) over an alphabet \( \Sigma \) (\emph{extended expression} in the following) is inductively defined by
\begin{gather*}
  \begin{aligned}
    E & = a,         & E & = \emptyset,     & E & = \varepsilon, & 
    E & = E_1 + E_2, & E & = E_1 \cdot E_2, & E & = E_1^*,
  \end{aligned}\\
  E = \phi(E_1, \ldots, E_n),
\end{gather*}
where \(a\) is a symbol in \( \Sigma \), \(\phi \) is a Boolean formula over the alphabet
\( \{1, \ldots, n\} \) and \( E_1, \ldots, E_n \) are \( n \) extended expressions over \( \Sigma \).
The \emph{language denoted} by an extended expression \(E\) is the language \( L(E) \) inductively defined by
\begin{gather*}
  \begin{aligned}
    L(a)         & = \{a\},              & L(\emptyset)     & = \emptyset,           & L(\varepsilon) & = \{ \varepsilon \}, \\
    L(E_1 + E_2) & = L(E_1) \cup L(E_2), & L(E_1 \cdot E_2) & = L(E_1) \cdot L(E_2), & L(E_1^*)       & = {L(E_1)}^*,
  \end{aligned}\\
  L(\phi(E_1, \ldots, E_n)) = \phi(L(E_1), \ldots, L(E_n)),
\end{gather*}
where \(a\) is a symbol in \( \Sigma \), \( \phi \) is a Boolean formula over the alphabet
\( \{1, \ldots, n\} \) and \( E_1, \ldots, E_n \) are \( n \) extended expressions over \( \Sigma \).

Since the Boolean satisfiability is an NP-hard problem~\cite{Coo71}, so is the emptiness problem for extended expressions,
as a direct consequence of Equation~\eqref{def language} and of denoted language definition.
\begin{proposition}
  Let \(\Sigma_k = \{a_1, \ldots, a_k\} \) be an alphabet and \(\phi\) be a Boolean formula over the alphabet
  \( \{1, \ldots, k\} \). Then
  \( L(\phi(a_1, \ldots, a_k)) \neq \emptyset \)
  \(\Longleftrightarrow \)
  \( \phi \) is satisfiable.
\end{proposition}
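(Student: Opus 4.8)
The plan is to unfold the definition of the denoted language and reduce the claim to a statement purely about the index set of the union appearing in Equation~\eqref{def language}.

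First I would expand \( L(\phi(a_1, \ldots, a_k)) \). By the definition of the language denoted by an extended expression, \( L(\phi(a_1, \ldots, a_k)) = \phi(L(a_1), \ldots, L(a_k)) = \phi(\{a_1\}, \ldots, \{a_k\}) \), and by Equation~\eqref{def language} this equals \( \bigcup_{i \mid \mathrm{eval}_i(\phi)} i(\{a_1\}, \ldots, \{a_k\}) \).

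Second, I would observe that every term of this union is non-empty. Indeed, for a fixed interpretation \( i \) over \( \{1, \ldots, k\} \), the language \( i(\{a_1\}, \ldots, \{a_k\}) = L'_1 \cdots L'_k \) where each \( L'_j \) is either \( \{\varepsilon\} \) or \( \{a_j\} \); in both cases \( L'_j \) is a singleton, hence non-empty, and a catenation of finitely many non-empty languages is non-empty (concretely, \( i(\{a_1\}, \ldots, \{a_k\}) \) is the singleton whose element is obtained from \( a_1 \cdots a_k \) by deleting the letters at the positions \( j \) with \( i(j) = 1 \)). This is the only place where one uses that the arguments of \( \phi \) are letters rather than, say, \( \emptyset \).

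Finally, a union of a family of sets is empty precisely when every member is empty (vacuously so when the index set is empty); since here all members are non-empty, \( L(\phi(a_1, \ldots, a_k)) \neq \emptyset \) holds if and only if \( \{ i \mid \mathrm{eval}_i(\phi) \} \neq \emptyset \), that is, if and only if there exists an interpretation \( i \) with \( \mathrm{eval}_i(\phi) = 1 \), which is by definition the satisfiability of \( \phi \). There is no genuine obstacle: the statement is essentially a bookkeeping consequence of the definitions, the only point needing (minimal) care being the non-emptiness of each catenation \( L'_1 \cdots L'_k \), which in turn rests on each \( L(a_j) = \{a_j\} \) being non-empty.
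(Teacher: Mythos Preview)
Your proof is correct and follows exactly the route the paper indicates: the paper states the proposition as a direct consequence of Equation~\eqref{def language} and of the denoted language definition, which is precisely the unfolding you perform. The only addition you make is spelling out why each term \(i(\{a_1\},\ldots,\{a_k\})\) is non-empty, which is the obvious (and only) thing to check.
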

%

\begin{corollary}
  Determining whether the language denoted by an extended expression is empty is NP-hard.
\end{corollary}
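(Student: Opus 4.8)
The plan is to exhibit a polynomial-time reduction from the Boolean satisfiability problem, whose NP-hardness is Cook's theorem~\cite{Coo71}, using the preceding Proposition essentially as a black box. Given an arbitrary Boolean formula \(\psi\) over some atom alphabet \(\Gamma\), I would first relabel its \(k = |\Gamma|\) atoms by the integers \(1, \ldots, k\), obtaining a formula \(\phi\) over \(\{1, \ldots, k\}\) that is equivalent to \(\psi\) up to renaming; this step is computable in linear time and does not affect satisfiability. Then I would form the extended expression \(E_\phi = \phi(a_1, \ldots, a_k)\) over the alphabet \(\Sigma_k = \{a_1, \ldots, a_k\}\), which is a legal extended expression since \(\phi\) is a Boolean formula over \(\{1, \ldots, k\}\) and each \(a_j\) is (trivially) an extended expression.

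Next I would check that \(\phi \mapsto E_\phi\) is a polynomial-time reduction: the syntax tree of \(E_\phi\) is obtained from that of \(\phi\) by attaching, below the single top-level \(\phi\)-node, the \(k\) leaves \(a_1, \ldots, a_k\), so \(|E_\phi|\) is linear in \(|\phi|\) and the construction is manifestly computable in polynomial time. By the Proposition, \(L(E_\phi) \neq \emptyset\) if and only if \(\phi\) is satisfiable; equivalently, \(L(E_\phi) = \emptyset\) if and only if \(\phi\) is a contradiction. Hence \(\phi \mapsto E_\phi\) is simultaneously a many-one reduction from \textsc{Sat} to the non-emptiness problem for extended expressions and a many-one reduction from \textsc{Unsat} to the emptiness problem.

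From here the conclusion is immediate: since \textsc{Sat} is NP-hard, so is the non-emptiness problem; and since deciding whether \(L(E) = \emptyset\) is exactly the complement of deciding whether \(L(E) \neq \emptyset\), negating the output turns any reduction to one into a reduction to the other, so the emptiness problem is NP-hard under polynomial-time Turing reductions as well (and it is moreover coNP-hard under many-one reductions, via the reduction from \textsc{Unsat} just noted). I do not expect any genuine obstacle here: all the mathematical content already sits in the Proposition, and the only points that need a word of care are that the atom-renaming step is harmless and that ``emptiness'' and ``non-emptiness'' get interchanged by complementation, which only matters if one insists on many-one rather than Turing reductions.
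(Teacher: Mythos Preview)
Your proposal is correct and follows exactly the approach the paper has in mind: the corollary is stated without proof as an immediate consequence of the preceding Proposition via the obvious reduction from \textsc{Sat}. Your added care about many-one versus Turing reductions (and the emptiness/non-emptiness flip) is more precise than the paper itself, which simply asserts NP-hardness without qualification.
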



\section{Factorization Power}\label{sec fact pow}

In this section, we exhibit a parameterized family of expressions \(E_n\) such that
the smallest NFA recognizing \( L(E_n) \) admits a number of states exponentially larger
than the sum of the number of symbols, the number of atoms and the number of operators of \(E_n\).
Let us consider the alphabet \( \Sigma_{2n} = \{a_1, \ldots, a_{2n} \} \) and
the expression \(E_n = \mathrm{Mirror}_n(1, \ldots, 2n)(a_1, \ldots, a_{2n}) \).
The expression \(E_n\) contains \(2n\) atoms, \(2n\) symbols and \(1\) operator.
Using classical Boolean operators, like \(\wedge \), \(\vee \) and \(\neg \),
the Boolean formula \(\mathrm{Mirror}_n(1, \ldots, 2n)\) can be turned into the equivalent one
\((1 \wedge 2n \vee \neg 1 \wedge \neg 2n) \wedge \cdots  \wedge (n \wedge (n + 1) \vee \neg n \wedge \neg (n + 1))\)
following equation in Example~\ref{ex mirror}, that contains \(4n\) atoms and \((6n - 1)\) operators,
which is a linearly larger Boolean formula.
In order to exhibit a lower bound of the number of states of any NFA recognizing \( L(E_n) \),
let us consider the following property~\cite{GS96}:
\begin{theorem}[\cite{GS96}]\label{thm shallit}
  Let \(L\subset \Sigma^*\) be a regular language, and suppose there exists a set of pairs
  \( P = \{ (x_i , w_i ) : 1 \leq i \leq n\} \) such that
  \(x_i w_i \in L\) for \(1 \leq i \leq n\)
  and
  \(x_i w_j \notin L\) for \(1 \leq i,j \leq n\) and \(i \neq j\).
  Then any NFA accepting \(L\) has at least \(n\) states.
\end{theorem}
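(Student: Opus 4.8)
The plan is to argue by contradiction using a \emph{fooling set} (distinguishability) argument: assuming an NFA for \(L\), we attach to each pair \((x_i,w_i)\in P\) an intermediate state reached after reading \(x_i\) along some accepting computation, and we show these states must be pairwise distinct, since otherwise two such computations could be spliced together to force a forbidden word \(x_i w_j\) (with \(i\neq j\)) into \(L\).

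Concretely, let \(A=(\Sigma,Q,I,F,\delta)\) be any NFA with \(L(A)=L\). Fix \(i\) with \(1\leq i\leq n\). Since \(x_i w_i\in L\), we have \(\delta(x_i w_i,I)=\delta(w_i,\delta(x_i,I))\cap F\neq\emptyset\), so there is a state \(q_i\in\delta(x_i,I)\) lying on an accepting computation on \(x_i w_i\); that is, \(q_i\in\delta(x_i,I)\) and \(\delta(w_i,q_i)\cap F\neq\emptyset\). I would then show that the assignment \(i\mapsto q_i\) is injective, which immediately yields \(|Q|\geq n\) and proves the theorem.

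For injectivity, suppose \(q_i=q_j\) for some \(i\neq j\). By the choice of \(q_j\) we have \(\delta(w_j,q_j)\cap F\neq\emptyset\), and substituting \(q_i=q_j\) gives \(\delta(w_j,q_i)\cap F\neq\emptyset\). Since also \(q_i\in\delta(x_i,I)\),
\[
\delta(x_i w_j,I)=\delta\bigl(w_j,\delta(x_i,I)\bigr)\supseteq\delta(w_j,q_i),
\]
so \(\delta(x_i w_j,I)\) meets \(F\), i.e.\ \(x_i w_j\in L\). This contradicts the hypothesis \(x_i w_j\notin L\) for \(i\neq j\). Hence the \(q_i\) are pairwise distinct and \(A\) has at least \(n\) states.

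There is no genuine obstacle here; the core is a pigeonhole on accepting computations. The only point needing a little care is the nondeterminism: the ``hybrid'' run that reads \(x_i\) to reach \(q_i=q_j\) and then reads \(w_j\) to a final state need not be one of the runs originally witnessing membership of \(x_i w_i\) or \(x_j w_j\), but it is nonetheless a legitimate accepting run of \(A\) on \(x_i w_j\), which is all that membership in \(L(A)\) requires. It is also worth remarking that the hypothesis is slightly stronger than needed: for each unordered pair \(\{i,j\}\), only one of \(x_i w_j\notin L\) and \(x_j w_i\notin L\) is actually used. The degenerate cases \(x_i=\varepsilon\) (whence \(q_i\in I\)) and \(w_i=\varepsilon\) (whence \(q_i\in F\)) are handled uniformly by the extended transition function.
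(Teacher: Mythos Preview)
Your proof is correct and is exactly the standard fooling-set argument for the Glaister--Shallit bound. Note, however, that the paper does not prove this theorem at all: it is quoted from~\cite{GS96} as a known result and used only as a black box to derive the subsequent lower bound, so there is no ``paper's own proof'' to compare against.
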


For any sequences of \(n\) Booleans \(bs = (b_1, \ldots, b_n)\),
let us consider the words \(v_{bs} = w_1 \cdots w_n\) and \(v'_{bs} = w'_1 \cdots w'_{n}\)
where
\begin{align}
  w_k  & =
  \begin{cases}
    a_k         & \text{ if } \neg b_k, \\
    \varepsilon & \text{ otherwise},
  \end{cases}
       & 
  w'_k & =
  \begin{cases}
    a_{n + k}   & \text{ if } \neg b_k, \\
    \varepsilon & \text{ otherwise}.
  \end{cases}
\end{align}

Denoting by \(\mathrm{rev}(b_1, \ldots, b_n)\) the sequence \( (b_n, \ldots, b_1) \),
since the only words in \( L(E_n) \) are the words \(v_{bs} \cdot v'_{\mathrm{rev}(bs)}\),
and since the words \(v_{bs} \cdot v'_{bs'}\) for any \(bs' \neq \mathrm{rev}(bs)\) are not
in \(L(E_n)\), it holds according Theorem~\ref{thm shallit} that

\begin{proposition}
  There is at least \(2^n\) states in any automaton recognizing \(L(E_n)\).
\end{proposition}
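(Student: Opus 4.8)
The plan is to apply Theorem~\ref{thm shallit} (the Shallit--Goralčík--Koubek-style fooling-set bound) with the set of $2^n$ pairs indexed by Boolean sequences $bs = (b_1,\ldots,b_n) \in \mathbb{B}^n$, taking $x_{bs} = v_{bs}$ and $w_{bs} = v'_{\mathrm{rev}(bs)}$. To invoke the theorem I must verify two things: first, that $v_{bs}\cdot v'_{\mathrm{rev}(bs)} \in L(E_n)$ for every $bs$; and second, that $v_{bs}\cdot v'_{bs'} \notin L(E_n)$ whenever $bs' \neq \mathrm{rev}(bs)$. Both follow from a precise description of $L(E_n)$, which is exactly the content of Example~\ref{ex lang mirror}: $L(E_n) = \mathrm{Mirror}_n(1,\ldots,2n)(\{a_1\},\ldots,\{a_{2n}\})$ consists of the words $u_1\cdots u_{2n}$ with $u_k \in \{\varepsilon, a_k\}$ and $u_k = \varepsilon \Leftrightarrow u_{2n-k+1} = \varepsilon$ for all $k \le 2n$.

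First I would translate the definitions of $v_{bs}$ and $v'_{bs}$ into this language: given $bs = (b_1,\ldots,b_n)$, the word $v_{bs}\cdot v'_{\mathrm{rev}(bs)}$ has, in position $k \le n$, the letter $a_k$ iff $\neg b_k$, and in position $n+k$ (for $k\le n$) the letter $a_{n+k}$ iff $\neg b_{n-k+1}$ (this last point is where $\mathrm{rev}$ enters; I would spell out the index bookkeeping carefully here). So position $j$ with $1 \le j \le n$ is empty iff $b_j$ holds, and position $2n - j + 1$ is empty iff $b_{(2n-j+1) - n} = b_{n-j+1}$... wait, I need to recheck: position $n + k$ is empty iff $b_{n-k+1}$; its mirror partner is position $2n - (n+k) + 1 = n - k + 1 \le n$, which is empty iff $b_{n-k+1}$ --- the same condition. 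Hence the mirror constraint $u_k = \varepsilon \Leftrightarrow u_{2n-k+1} = \varepsilon$ is satisfied for every $k$, so $v_{bs}\cdot v'_{\mathrm{rev}(bs)} \in L(E_n)$. This establishes the first hypothesis.

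For the second hypothesis, suppose $v_{bs}\cdot v'_{bs'} \in L(E_n)$ for some sequences $bs, bs'$; I must show $bs' = \mathrm{rev}(bs)$. The key observation is that $v_{bs}\cdot v'_{bs'}$ is \emph{already} a word of the form $u_1\cdots u_{2n}$ with $u_k \in \{\varepsilon, a_k\}$ --- the prefix $v_{bs}$ uses letters among $a_1,\ldots,a_n$ in the correct positions and the suffix $v'_{bs'}$ uses letters among $a_{n+1},\ldots,a_{2n}$ in the correct positions. Moreover this decomposition into empty/non-empty positions is \emph{unique}, because the non-empty letters $a_1 < a_2 < \cdots < a_{2n}$ are strictly increasing and pairwise distinct, so the factorization witnessing membership in $L(E_n)$ is forced. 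Therefore membership is equivalent to the mirror constraint on this forced pattern: position $k$ empty $\Leftrightarrow$ position $2n-k+1$ empty. Reading this off for $k \le n$: position $k$ is empty iff $b_k$, and position $2n-k+1 = n + (n-k+1)$ is empty iff $b'_{n-k+1}$; so $b_k = b'_{n-k+1}$ for all $k \le n$, i.e.\ $bs' = \mathrm{rev}(bs)$, as required.

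I expect the main obstacle to be purely notational: getting the index arithmetic of $\mathrm{rev}$ and the two index ranges $1..n$ and $n+1..2n$ to line up correctly, and making rigorous the claim that the empty/non-empty position pattern of a word in $L(E_n)$ is uniquely determined (which is where one uses that the $a_i$ are distinct symbols, not that they merely lie in disjoint blocks). Once these are pinned down, the two fooling-set conditions are immediate, Theorem~\ref{thm shallit} gives the lower bound of $|\mathbb{B}^n| = 2^n$ states on any NFA --- hence any automaton --- recognizing $L(E_n)$, and the proposition follows. (Strictly, the same argument also yields a matching upper bound via the derived-term or position automaton of $E_n$, but only the lower bound is asserted here.)
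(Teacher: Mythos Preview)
Your proposal is correct and follows exactly the paper's approach: the paper also uses the fooling set \(\{(v_{bs}, v'_{\mathrm{rev}(bs)}) \mid bs \in \mathbb{B}^n\}\) and appeals to Theorem~\ref{thm shallit}, asserting (without the detail you supply) that the words of \(L(E_n)\) are precisely the \(v_{bs}\cdot v'_{\mathrm{rev}(bs)}\). Your additional verification of the index arithmetic and of the uniqueness of the empty/non-empty pattern (via distinctness of the \(a_i\)) is exactly the bookkeeping the paper leaves implicit.
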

%
\begin{theorem}
  There exist extended regular expressions exponentially smaller than any automaton recognizing their denoted languages.
\end{theorem}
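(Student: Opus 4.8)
The plan is to exhibit the family already constructed in this section and verify that it witnesses the claimed exponential gap. Concretely, I would take $E_n = \mathrm{Mirror}_n(1, \ldots, 2n)(a_1, \ldots, a_{2n})$ over the alphabet $\Sigma_{2n} = \{a_1, \ldots, a_{2n}\}$, exactly as defined above. The size measure of $E_n$ is immediate from the discussion preceding the statement: $2n$ symbols, $2n$ atoms, and a single operator, so the total is $4n + 1$, which is linear in $n$.

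The core of the proof is then the lower bound on any NFA recognizing $L(E_n)$, which is precisely the preceding Proposition: every automaton accepting $L(E_n)$ has at least $2^n$ states. I would quote that Proposition directly, together with Theorem~\ref{thm shallit} on which it rests, so the only remaining work is to package the two facts into the asymptotic comparison: $2^n$ grows exponentially while $4n+1$ grows linearly, hence for $n$ large enough the smallest NFA for $L(E_n)$ is exponentially larger than $E_n$ under the combined size measure (symbols plus atoms plus operators). That is the whole argument.

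I would also add one sentence addressing a possible objection: one might worry that the succinctness of $E_n$ is an artifact of allowing the non-classical operator $\mathrm{Mirror}_n$, which hides complexity inside its definition. The paragraph before the statement already answers this — rewriting $\mathrm{Mirror}_n$ with the classical connectives $\wedge$, $\vee$, $\neg$ yields an equivalent Boolean formula with $4n$ atoms and $6n - 1$ operators, still linear in $n$ — so the exponential gap survives even when we insist on standard Boolean operators. I would cite the equation in Example~\ref{ex mirror} for this expansion.

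The main obstacle is not in the proof I am presenting, since it reduces to assembling results stated earlier; the real substance lies in the separating-word analysis feeding the preceding Proposition, namely the observation that the only words in $L(E_n)$ have the form $v_{bs}\cdot v'_{\mathrm{rev}(bs)}$ and that $v_{bs}\cdot v'_{bs'} \notin L(E_n)$ whenever $bs' \neq \mathrm{rev}(bs)$, which gives the $2^n$ pairwise-distinguishable pairs $(v_{bs}, v'_{\mathrm{rev}(bs)})$ required by Theorem~\ref{thm shallit}. For the theorem as stated, I would simply remark that this has been verified and conclude.
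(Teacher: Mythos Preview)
Your proposal is correct and follows exactly the approach the paper takes: the theorem is stated without an explicit proof because it is meant to follow immediately from the preceding Proposition (the $2^n$ state lower bound via Theorem~\ref{thm shallit}) together with the linear size count for $E_n$ given at the start of the section, including the remark that the classical $\wedge,\vee,\neg$ expansion of $\mathrm{Mirror}_n$ remains linear. Your write-up simply makes explicit what the paper leaves to the reader.
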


\section{Partial Derivatives and Automaton Computation}\label{sec part der}

Let us now show how to extend the Antimirov method in order to syntactically solve the membership
test and to compute a finite automaton recognizing the language denoted by an extended expression.
%
%
First, we define the partial derivative of an expression w.r.t.\ a symbol,
where the derivation formula for the action of a Boolean combination is obtained
by considering the fact that the empty word may appear at the first position for two reasons:
if the first operand is nullable, or because the empty word is inserted by the multi-tilde.

\begin{definition}
  Let \(E\) be an extended expression and \(a\) be a symbol.
  The \emph{partial derivative} of \(E\) w.r.t.\ \(a\) is the set \(\delta_a(E)\)
  of extended expressions inductively defined as follows:
  \begin{gather*}
    \begin{aligned}
      \delta_a(b)             & = \{\varepsilon\} \mid b = a,
                              & 
      \delta_a(\varepsilon)   & = \emptyset,
      \\
      \delta_a(\emptyset)     & = \emptyset,
                              & 
      \delta_a(E_1 + E_2)     & = \delta_a(E_1) \cup \delta_a(E_2),
      \\
      \delta_a(E_1 \cdot E_2) & = \delta_a(E_1) \odot E_2 \cup \delta_a(E_2) \mid \varepsilon \in L(E_1),
                              & 
      \delta_a(E_1^*)         & = \delta_a(E_1) \odot E_1^*,
    \end{aligned}\\
    \begin{aligned}
      \delta_a(\phi(E_1, \ldots, E_n)) = & \ \delta_a(E_1) \odot \phi'(E_2, \ldots, E_n)                             \\
                                         & \qquad \cup \delta_a(\phi'(E_2, \ldots, E_n)) \mid \varepsilon \in L(E_1) \\
                                         & \qquad \cup \delta_a(\phi''(E_2, \ldots, E_n)),
    \end{aligned}
  \end{gather*}
  where \(b\) is a symbol in \( \Sigma \), \( \phi \) is a Boolean formula over the alphabet
  \( \{1, \ldots, n\} \), \( E_1, \ldots, E_n \) are \( n \) extended expressions over \( \Sigma \)
  and
  \begin{align*}
    \phi'  & = \phi_{1 \coloneqq \bot, 2  \coloneqq 1, \ldots, n \coloneqq n - 1}, & 
    \phi'' & = \phi_{1 \coloneqq \top, 2  \coloneqq 1, \ldots, n \coloneqq n - 1}.
  \end{align*}
\end{definition}

In the following, to shorten the expressions in the next examples, we consider the trivial quotients
\(E \cdot \varepsilon  = \varepsilon \cdot E = E \) and
\(E \cdot \emptyset   = \emptyset \cdot E = \emptyset\).
Furthermore, when \(\phi \) is a contradiction, we consider that \(\phi(E_1, \ldots, E_n) = \emptyset\).
\begin{example}\label{ex calc deriv symb}
  Let us consider the alphabet \(\Sigma = \{a,b\} \) and the expression
  \(E = \mathrm{Mirror}_2(1, 2, 3, 4)(a^+, b^+, a^+, b^+)\), where \(x^+ = x \cdot x^*\).
  The derived terms of \(E\) w.r.t.\ the symbols in \(\Sigma \) are the following, where underlined computations
  equal \(\emptyset\):
  \begin{align*}
    \delta_a(E) & = \delta_a(a^+) \odot \mathrm{Mirror}_2(\bot, 1, 2, 3)(b^+, a^+, b^+) \cup \underline{\delta_a(\mathrm{Mirror}_2(\top, 1, 2, 3)(b^+, a^+, b^+) )}   \\
                & = \{a^*\} \odot (\mathrm{Mirror}_1(1, 2) \wedge \neg 3)(b^+, a^+, b^+)                                                                              \\
                & = \{a^* \cdot (\mathrm{Mirror}_1(1, 2) \wedge \neg 3)(b^+, a^+, b^+)\},                                                                             \\
    \delta_b(E) & = \underline{\delta_b(a^+) \odot \mathrm{Mirror}_2(\bot, 1, 2, 3)(b^+, a^+, b^+)} \cup \delta_b(\mathrm{Mirror}_2(\top, 1, 2, 3)(b^+, a^+, b^+) )   \\
                & = \delta_b((\mathrm{Mirror}_1(1, 2) \wedge 3)(b^+, a^+, b^+) )                                                                                      \\
                & = \delta_b(b^+ ) \odot (\mathrm{Mirror}_1(\bot, 1) \wedge  2)(a^+, b^+) \cup \underline{\delta_b((\mathrm{Mirror}_1(\top, 1) \wedge  2)(a^+, b^+))} \\
                & = \{b^*\} \odot (\neg 1 \wedge  2)(a^+, b^+)
    = \{b^* \cdot (\neg 1 \wedge  2)(a^+, b^+)\}.
  \end{align*}
\end{example}
As usual, the partial derivative is closely related to the computation of the quotient
of the denoted language, as a direct consequence of Corollary~\ref{cor quot tilde},
and by induction over the structure of \(E\).
\begin{proposition}\label{prop deriv symbol quot}
  Let \(E\) be an extended expression and \(a\) be a symbol.
  Then
  \begin{equation*}
    \bigcup_{E' \in \delta_a(E)} L(E') = a^{-1}(L(E)).
  \end{equation*}
\end{proposition}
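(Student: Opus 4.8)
The plan is to prove Proposition~\ref{prop deriv symbol quot} by structural induction on the extended expression $E$, following the classical Antimirov argument for the base cases and the standard regular operators, and handling the new constrained-multi-tilde case via Corollary~\ref{cor quot tilde}. Throughout I would use the notation $L(\delta_a(E)) := \bigcup_{E' \in \delta_a(E)} L(E')$ and the obvious facts that $L(\mathcal{E} \odot G) = L(\mathcal{E}) \cdot L(G)$ and $L(\mathcal{E}_1 \cup \mathcal{E}_2) = L(\mathcal{E}_1) \cup L(\mathcal{E}_2)$, together with the identity $a^{-1}(X \cdot Y) = a^{-1}(X) \cdot Y \cup (a^{-1}(Y) \mid \varepsilon \in X)$ and $a^{-1}(X^*) = a^{-1}(X) \cdot X^*$ for languages $X, Y$.

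For the base cases: $a^{-1}(L(b)) = \{\varepsilon\}$ if $b = a$ and $\emptyset$ otherwise, which matches $L(\delta_a(b)) = L(\{\varepsilon\} \mid b=a)$; and $a^{-1}(L(\varepsilon)) = a^{-1}(L(\emptyset)) = \emptyset = L(\delta_a(\varepsilon)) = L(\delta_a(\emptyset))$. For the inductive cases $E_1 + E_2$, $E_1 \cdot E_2$ and $E_1^*$, I would unfold the definition of $\delta_a$, push $L$ through $\odot$, $\cup$ and $\mid$, apply the induction hypothesis to replace each $L(\delta_a(E_i))$ by $a^{-1}(L(E_i))$, and recognize the result as the quotient identity for union, catenation and star respectively; note that $\varepsilon \in L(E_1)$ is exactly the side condition appearing both in the derivative formula and in the quotient formula for catenation, so the two match verbatim.

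The main case is $E = \phi(E_1, \ldots, E_n)$. Here I would apply $L$ to the three-line definition of $\delta_a(\phi(E_1, \ldots, E_n))$, obtaining
\begin{align*}
  L(\delta_a(E)) &= a^{-1}(L(E_1)) \cdot \phi'(L(E_2), \ldots, L(E_n)) \\
  &\quad \cup \bigl(a^{-1}(\phi'(L(E_2), \ldots, L(E_n))) \mid \varepsilon \in L(E_1)\bigr) \\
  &\quad \cup a^{-1}(\phi''(L(E_2), \ldots, L(E_n))),
\end{align*}
where to pass from $L(\delta_a(E_1) \odot \phi'(E_2,\ldots,E_n))$ to the first summand I use $L(\delta_a(E_1)) = a^{-1}(L(E_1))$ (induction hypothesis on $E_1$) together with $L(\phi'(E_2,\ldots,E_n)) = \phi'(L(E_2),\ldots,L(E_n))$ (definition of the denoted language), and to pass from $L(\delta_a(\phi'(E_2,\ldots,E_n)))$ to $a^{-1}(\phi'(L(E_2),\ldots,L(E_n)))$ I use the induction hypothesis applied to the strictly smaller expression $\phi'(E_2,\ldots,E_n)$ (and likewise for $\phi''$). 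The right-hand side is then literally the statement of Corollary~\ref{cor quot tilde} applied to the languages $L(E_1), \ldots, L(E_n)$, so it equals $a^{-1}(\phi(L(E_1), \ldots, L(E_n))) = a^{-1}(L(E))$, closing the induction.

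The only subtle point — the place I would be most careful — is the well-foundedness of the induction: the recursive calls in the tilde case are applied not to the immediate subexpressions $E_i$ but to the compound expressions $\phi'(E_2, \ldots, E_n)$ and $\phi''(E_2, \ldots, E_n)$. One must check that these are genuinely smaller than $\phi(E_1, \ldots, E_n)$ in whatever well-founded measure is used; taking the measure to be (number of operand slots of the outermost tilde, total size of operands) lexicographically, or simply the total number of subexpression occurrences among $E_1, \ldots, E_n$, works since $\phi'$ and $\phi''$ have operands $E_2, \ldots, E_n$ — one fewer than $\phi$ — and substitution inside the Boolean formula $\phi$ does not touch the $E_i$ at all. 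Once this is pinned down, everything else is a routine unfolding, and Corollary~\ref{cor quot tilde} does all the real work.
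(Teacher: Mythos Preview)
Your proof is correct and follows exactly the approach the paper sketches: structural induction on \(E\), with the constrained-tilde case discharged via Corollary~\ref{cor quot tilde}. Your explicit attention to well-foundedness (the recursive calls on \(\phi'(E_2,\ldots,E_n)\) and \(\phi''(E_2,\ldots,E_n)\), which are not literal subterms but are smaller under the measure you give) is a genuine refinement over the paper's one-line justification.
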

The partial derivative can be classically extended from symbols to words by repeated applications.
Let \(E\) be an extended expression, \(a\) be a symbol and \(w\) be a word.
Then
\begin{align*}
  \delta_\varepsilon(E) & = \{E\}, & \delta_{aw}(E) & = \bigcup_{E' \in \delta_a(E)} \delta_w(E').
\end{align*}
\begin{example}
  Let us consider the expression \(E\) and its derived terms computed in Example~\ref{ex calc deriv symb}.
  Then:
  \begin{align*}
    \delta_{aa}(E) & = \delta_a(a^* \cdot (\mathrm{Mirror}_1(1, 2) \wedge \neg 3)(b^+, a^+, b^+))
    = \{a^* \cdot (\mathrm{Mirror}_1(1, 2) \wedge \neg 3)(b^+, a^+, b^+)\},                        \\
    \delta_{ab}(E) & = \delta_b(a^* \cdot (\mathrm{Mirror}_1(1, 2) \wedge \neg 3)(b^+, a^+, b^+))  \\
                   & = \delta_b((\mathrm{Mirror}_1(1, 2) \wedge \neg 3)(b^+, a^+, b^+))            \\
                   & = \delta_b(b^+) \odot (\mathrm{Mirror}_1(\bot, 1) \wedge \neg 2)(a^+, b^+)
    \cup \delta_b((\mathrm{Mirror}_1(\top, 1) \wedge \neg 2)(a^+, b^+))                            \\
                   & = \{b^*\} \odot (\neg 1 \wedge \neg 2)(a^+, b^+)
    \cup \delta_b((1 \wedge \neg 2)(a^+, b^+))                                                     \\
                   & = \{b^* \cdot (\neg 1 \wedge \neg 2)(a^+, b^+)\} \cup \delta_b((\neg 1)(b^+)) \\
                   & = \{b^* \cdot (\neg 1 \wedge \neg 2)(a^+, b^+), b^*\}.
  \end{align*}
\end{example}
Once again, this operation is a syntactical representation of the quotient computation,
as a direct consequence of Proposition~\ref{prop deriv symbol quot}, and by induction over the structure of words.
\begin{proposition}\label{lang deriv part word}
  Let \(E\) be an extended expression and \(w\) be a word.
  Then
  \begin{equation*}
    \bigcup_{E' \in \delta_w(E)} L(E') = w^{-1}(L(E)).
  \end{equation*}
\end{proposition}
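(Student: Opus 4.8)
The statement to prove is Proposition~\ref{lang deriv part word}:

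\begin{equation*}
  \bigcup_{E' \in \delta_w(E)} L(E') = w^{-1}(L(E)).
\end{equation*}

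The plan is to proceed by induction on the length of the word $w$, using Proposition~\ref{prop deriv symbol quot} as the base for one-symbol steps.

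For the base case $w = \varepsilon$, we have $\delta_\varepsilon(E) = \{E\}$ by definition, so $\bigcup_{E' \in \delta_\varepsilon(E)} L(E') = L(E) = \varepsilon^{-1}(L(E))$, since the left quotient of a language by the empty word is the language itself.

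For the inductive step, write $w = a w'$ with $a$ a symbol and $w'$ a shorter word. By definition of the partial derivative extended to words, $\delta_{aw'}(E) = \bigcup_{E' \in \delta_a(E)} \delta_{w'}(E')$. Hence
\begin{align*}
  \bigcup_{E'' \in \delta_{aw'}(E)} L(E'')
  &= \bigcup_{E' \in \delta_a(E)} \ \bigcup_{E'' \in \delta_{w'}(E')} L(E'')
  = \bigcup_{E' \in \delta_a(E)} {w'}^{-1}(L(E')),
\end{align*}
where the last equality applies the induction hypothesis to each $E' \in \delta_a(E)$ (each of which is a genuine extended expression). Since left quotient commutes with arbitrary unions, ${w'}^{-1}\bigl(\bigcup_{E' \in \delta_a(E)} L(E')\bigr) = \bigcup_{E' \in \delta_a(E)} {w'}^{-1}(L(E'))$, and by Proposition~\ref{prop deriv symbol quot} we have $\bigcup_{E' \in \delta_a(E)} L(E') = a^{-1}(L(E))$. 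Combining these yields $\bigcup_{E'' \in \delta_{aw'}(E)} L(E'') = {w'}^{-1}(a^{-1}(L(E))) = (aw')^{-1}(L(E)) = w^{-1}(L(E))$, using the standard identity $(uv)^{-1}(L) = v^{-1}(u^{-1}(L))$.

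There is no real obstacle here: the proposition is a routine bookkeeping consequence of Proposition~\ref{prop deriv symbol quot}, the recursive definition of $\delta_w$, and the elementary facts that quotients distribute over unions and that $(uv)^{-1} = v^{-1} \circ u^{-1}$. The only point requiring a moment's care is that the induction hypothesis must be applied to each element $E'$ of $\delta_a(E)$ separately, which is legitimate because $\delta_a$ produces a finite set of extended expressions to which the statement applies verbatim.
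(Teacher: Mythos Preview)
Your proof is correct and follows exactly the approach indicated in the paper: induction over the structure of words, invoking Proposition~\ref{prop deriv symbol quot} for the one-symbol step, together with the standard facts that quotients distribute over unions and that \((aw')^{-1} = {w'}^{-1} \circ a^{-1}\).
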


As a direct consequence, the membership test is solved for extended expressions.
Indeed, determining whether a word \(w\) belongs to the language denoted by an extended expression
\(E\) can be performed by computing the partial derivative of \(E\) w.r.t.\ \(w\) and then by
testing whether it contains a nullable expression, \emph{i.e.} an expression whose denoted language
contains the empty word.

Let us now show that the partial derivative automaton of an extended expression \(E\) is a finite one that recognizes
\(L(E)\).


In the following, we denote by \(\mathcal{D}_E\) the set of derived terms of an expression \(E\), \emph{i.e.}, the set of expressions \(\displaystyle \bigcup_{w\in\Sigma^*} \delta_w(E)\).

Moreover, given an expression \(\phi(E_1, \ldots, E_n)\), an integer \( 1 \leq k \leq n - 1\) and an interpretation \(i\) in \( \{ 1, \ldots, k\} \rightarrow \mathbb{B}\),
we denote by \( \mathcal{D}_{E, k, i} \) the set \(\mathcal{D}_{E_k} \odot \phi'(E_{k+1}, \ldots, E_n) \), where
\begin{equation*}
  \phi' = \phi_{
    1 \coloneqq
    \begin{cases}
      \top & \text{if } i(1),  \\
      \bot & \text{otherwise},
    \end{cases}
    \ldots,
    k \coloneqq
    \begin{cases}
      \top & \text{if } i(k),  \\
      \bot & \text{otherwise},
    \end{cases}
    k+1 \coloneqq 1,
    \ldots,
    n \coloneqq n - k
  }.
\end{equation*}
First, the union of these sets includes  the partial derivatives and is stable w.r.t.\ derivation by a symbol,
by induction over the structures of extended expressions, of words and over the integers.
\begin{proposition}
  Let \(E\) be an extended expression and \(a\) be a symbol.
  Then the two following conditions hold:
  \begin{enumerate}
    \item \( \displaystyle \delta_a(E) \subset \bigcup_{\substack{1 \leq k \leq n, \\ i \in \{ 1, \ldots, k\} \rightarrow \mathbb{B}}} \mathcal{D}_{E, k, i}\),
    \item \( \displaystyle \bigcup_{E' \in \mathcal{D}_{E, k, i} } \delta_a(E') \subset
          \displaystyle \bigcup_{
            \substack{k \leq k' \leq n, \\ i' \in \{ 1, \ldots, k'\} \rightarrow \mathbb{B}}
          } \mathcal{D}_{E, k', i'}
          \).
  \end{enumerate}
\end{proposition}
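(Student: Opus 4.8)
The statement only involves expressions of the shape \(E=\phi(E_1,\ldots,E_n)\), since the sets \(\mathcal{D}_{E,k,i}\) are defined only in that case; so assume \(E\) has this form. The plan is to establish item (1) by induction on the arity \(n\), using the derivation rule for the Boolean-combination operator, and then to deduce item (2) from item (1). Everything hinges on one bookkeeping statement, a \emph{substitution-composition lemma}: if \(\psi\) is obtained from \(\phi\) by fixing the atoms \(1,\ldots,k\) to the Booleans prescribed by an interpretation \(i\) and renaming \(k+1,\ldots,n\) to \(1,\ldots,n-k\) (so that \(\mathcal{D}_{E,k,i}=\mathcal{D}_{E_k}\odot\psi(E_{k+1},\ldots,E_n)\)), then for every \(1\le j\le n-k\) and every \(i''\in\{1,\ldots,j\}\to\mathbb{B}\),
\[
  \mathcal{D}_{\psi(E_{k+1},\ldots,E_n),\,j,\,i''}=\mathcal{D}_{E,\,k+j,\,i'},
\]
where \(i'\in\{1,\ldots,k+j\}\to\mathbb{B}\) agrees with \(i\) on \(\{1,\ldots,k\}\) and satisfies \(i'(k+\ell)=i''(\ell)\) for \(1\le\ell\le j\). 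This is immediate by tracking how the two iterated substitutions act on each atom of \(\phi\): a fixed atom stays fixed, while a surviving atom \(p\) is sent through the composition of the two index shifts to \(p-k-j\), which is exactly the renaming used in the definition of \(\mathcal{D}_{E,k+j,i'}\).

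For item (1), recall that
\[
  \delta_a(\phi(E_1,\ldots,E_n))=\delta_a(E_1)\odot\phi'(E_2,\ldots,E_n)\;\cup\;\bigl(\delta_a(\phi'(E_2,\ldots,E_n))\mid\varepsilon\in L(E_1)\bigr)\;\cup\;\delta_a(\phi''(E_2,\ldots,E_n)),
\]
with \(\phi'=\phi_{1\coloneqq\bot,2\coloneqq1,\ldots,n\coloneqq n-1}\) and \(\phi''=\phi_{1\coloneqq\top,2\coloneqq1,\ldots,n\coloneqq n-1}\). For the first summand, \(\delta_a(E_1)\subset\mathcal{D}_{E_1}\) by definition of \(\mathcal{D}_{E_1}\), hence it lies in \(\mathcal{D}_{E_1}\odot\phi'(E_2,\ldots,E_n)=\mathcal{D}_{E,1,i_0}\) with \(i_0(1)=0\). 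The second and third summands are \(\delta_a\) of the tilde expressions \(\phi'(E_2,\ldots,E_n)\) and \(\phi''(E_2,\ldots,E_n)\), both of arity \(n-1\); applying the induction hypothesis and then the substitution-composition lemma (with the role of \(k\) played by \(1\) and the leading Boolean equal to \(0\), resp.\ \(1\)) places them inside \(\bigcup_{2\le k\le n,\ i}\mathcal{D}_{E,k,i}\). Together this gives \(\delta_a(E)\subset\bigcup_{1\le k\le n,\ i}\mathcal{D}_{E,k,i}\); the base cases \(n\le1\) follow from \(\delta_a(\varepsilon)=\delta_a(\emptyset)=\emptyset\) and the reduction of atom-free formulae to \(\bot\) or \(\top\).

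For item (2), fix \(1\le k\le n\), an interpretation \(i\), and \(E'\in\mathcal{D}_{E,k,i}=\mathcal{D}_{E_k}\odot\psi(E_{k+1},\ldots,E_n)\) with \(\psi\) as above; by the trivial quotient rules \(\varepsilon\cdot G=G\), \(\emptyset\cdot G=\emptyset\) we may write \(E'=F\cdot\psi(E_{k+1},\ldots,E_n)\) for some \(F\in\mathcal{D}_{E_k}\). Then
\[
  \delta_a(E')=\delta_a(F)\odot\psi(E_{k+1},\ldots,E_n)\;\cup\;\bigl(\delta_a(\psi(E_{k+1},\ldots,E_n))\mid\varepsilon\in L(F)\bigr).
\]
Since \(\mathcal{D}_{E_k}\) is closed under derivation, \(\delta_a(F)\subset\mathcal{D}_{E_k}\), so the first summand lies in \(\mathcal{D}_{E_k}\odot\psi(E_{k+1},\ldots,E_n)=\mathcal{D}_{E,k,i}\) (the case \(k'=k\)). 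For the second summand, \(\psi(E_{k+1},\ldots,E_n)\) is a tilde expression of arity \(n-k\), so item (1) gives \(\delta_a(\psi(E_{k+1},\ldots,E_n))\subset\bigcup_{1\le j\le n-k,\ i''}\mathcal{D}_{\psi(E_{k+1},\ldots,E_n),j,i''}\), and the substitution-composition lemma identifies each of these sets with \(\mathcal{D}_{E,k+j,i'}\) for a suitable \(i'\), with \(k+1\le k+j\le n\). Hence \(\bigcup_{E'\in\mathcal{D}_{E,k,i}}\delta_a(E')\subset\bigcup_{k\le k'\le n,\ i'}\mathcal{D}_{E,k',i'}\).

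The only genuine difficulty is the index bookkeeping packaged in the substitution-composition lemma: one must be careful with the renaming conventions \(\phi_{2\coloneqq1,\ldots,n\coloneqq n-1}\) and with how they compose under repeated application, and verify that the partially evaluated formulae \(\phi',\phi''\) and their further evaluations coincide exactly with the formulae \(\psi\) occurring in the definition of \(\mathcal{D}_{E,k,i}\). Everything else — the three-summand split of \(\delta_a\) on a Boolean combination, the inclusions \(\delta_a(F)\subset\mathcal{D}_F\) and \(\delta_a(\mathcal{D}_F)\subset\mathcal{D}_F\), and the degenerate situations \(F\in\{\varepsilon,\emptyset\}\) or \(n\le1\) — is routine.
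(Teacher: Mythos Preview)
Your proof is correct and follows essentially the same approach the paper sketches: the paper merely states that both items hold ``by induction over the structures of extended expressions, of words and over the integers'' without further detail, and your argument is a fleshed-out version of exactly that induction on the arity \(n\), using the three-summand derivation rule for \(\phi(E_1,\ldots,E_n)\). Your explicit \emph{substitution-composition lemma} is precisely the bookkeeping the paper leaves implicit, and your deduction of item (2) from item (1) via the product rule is the natural route.
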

As a direct consequence, the set of the derived terms of an extended expression is included in the union
of the \( \mathcal{D}_{E, k, i} \) sets.
\begin{corollary}
  Let \(E\) be an extended expression.
  Then
  \begin{equation*}
    \mathcal{D}_E \subset \bigcup_{\substack{1 \leq k \leq n,\\ i \in \{ 1, \ldots, k\} \rightarrow \mathbb{B}}} \mathcal{D}_{E, k, i}.
  \end{equation*}
\end{corollary}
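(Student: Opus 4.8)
The plan is to derive the corollary directly from the preceding proposition by an induction on the length of the word $w$ used to produce a derived term. Concretely, recall that $\mathcal{D}_E = \bigcup_{w \in \Sigma^*} \delta_w(E)$, so it suffices to show that $\delta_w(E) \subset \bigcup_{1 \leq k \leq n,\ i \in \{1,\ldots,k\}\rightarrow\mathbb{B}} \mathcal{D}_{E,k,i}$ for every word $w$. Denote the right-hand union by $\mathcal{U}_E$ for brevity. The base case $w = \varepsilon$ gives $\delta_\varepsilon(E) = \{E\}$, and one must check $E \in \mathcal{U}_E$; this holds because, taking $k = n$ (or more precisely the outermost structure of $E$ written as a constrained multi-tilde, padding trivially if $E$ itself is not of that form) and $i$ the empty-or-appropriate interpretation, $\mathcal{D}_{E,n,i}$ contains $E$ itself. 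If $E$ is not of the form $\phi(E_1,\ldots,E_n)$ at top level, one treats it as the degenerate case $n = 1$ with $\phi = \neg 1$, so that $\mathcal{D}_{E,1,i} = \mathcal{D}_E \odot (\text{empty tail})$ recovers $E$; the cleanest route is to observe that the first item of the proposition already asserts $\delta_a(E) \subset \mathcal{U}_E$ for the top-level expression, and the base case for $\mathcal{D}_E$ is handled by noting $E \in \mathcal{D}_{E,k,i}$ for the relevant $k,i$.

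The inductive step is where the two items of the proposition combine. Suppose $\delta_w(E) \subset \mathcal{U}_E$; I want $\delta_{wa}(E) \subset \mathcal{U}_E$. By definition $\delta_{wa}(E) = \bigcup_{E' \in \delta_w(E)} \delta_a(E')$. Each such $E'$ lies, by the induction hypothesis, in some $\mathcal{D}_{E,k,i}$. Applying item~(2) of the proposition to this $E'$ — that is, $\bigcup_{E' \in \mathcal{D}_{E,k,i}} \delta_a(E') \subset \bigcup_{k \leq k' \leq n,\ i' \in \{1,\ldots,k'\}\rightarrow\mathbb{B}} \mathcal{D}_{E,k',i'}$ — we conclude $\delta_a(E') \subset \mathcal{U}_E$. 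Taking the union over all $E' \in \delta_w(E)$ gives $\delta_{wa}(E) \subset \mathcal{U}_E$, closing the induction. Then $\mathcal{D}_E = \bigcup_w \delta_w(E) \subset \mathcal{U}_E$, which is exactly the claimed inclusion.

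The only genuine subtlety — and the step I would flag as the main thing to get right rather than the main obstacle — is the bookkeeping at the base case and the handling of expressions whose outermost operator is not a constrained multi-tilde: the statement of the corollary silently fixes an $n$, and one must be explicit about what $n$ means when $E$ is, say, a sum or a star. I would resolve this by the same convention used implicitly in the proposition: every extended expression can be read as $\phi(E_1,\ldots,E_n)$ at the top level for a suitable $n$ (with $n=1$ and $\phi = \neg 1$ for the "no outer tilde" case, so that the tilde inserts nothing and $\phi(E_1) = L(E_1)$), and then $\mathcal{D}_{E,k,i}$ is well-defined for all $1 \leq k \leq n$. Once this is fixed, everything reduces to two one-line applications of the proposition plus the trivial observation $\delta_\varepsilon(E) = \{E\} \subset \mathcal{D}_{E,n,i}$, so no real calculation is required. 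The proof is therefore: state the induction on $|w|$, invoke item~(1) (or the base-case membership) for the base, invoke item~(2) for the step, and conclude by taking the union over all words.
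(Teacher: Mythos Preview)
Your approach is correct and matches the paper, which offers no proof beyond calling the corollary ``a direct consequence'' of the proposition. Spelling out the induction on \(|w|\) with item~(1) as the base (\(|w|=1\)) and item~(2) as the step is exactly the intended argument.

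Two small points. First, your worry about expressions whose outermost operator is not a constrained multi-tilde is misplaced: the sets \(\mathcal{D}_{E,k,i}\) are only defined when \(E=\phi(E_1,\ldots,E_n)\), so the corollary is implicitly restricted to that case despite the loose ``Let \(E\) be an extended expression'' in the statement --- there is no need to invent a wrapper \(\phi=\neg 1\). Second, the genuine base-case issue is \(w=\varepsilon\): the expression \(E=\phi(E_1,\ldots,E_n)\) itself is not literally of the form \(F\cdot\phi'(E_{k+1},\ldots,E_n)\) with \(F\in\mathcal{D}_{E_k}\), so \(E\notin\mathcal{U}_E\) in general. This is a minor sloppiness in the paper's statement, not in your argument; for the only downstream use (finiteness of \(\mathcal{D}_E\)) the single extra element \(E\) is harmless. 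The clean fix is to run the induction from \(|w|\geq 1\) and handle \(\{E\}\) separately, which you effectively suggest.
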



According to a trivial inductive reasoning, one can show that such a set is finite.
\begin{corollary}\label{cor finite set}
  Let \(E\) be an extended expression and \(w\) be a word.
  Then
  \begin{center}
    \(\displaystyle \bigcup_{w \in \Sigma^*} \delta_w(E)\) is a finite set.
  \end{center}
\end{corollary}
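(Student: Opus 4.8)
The plan is to derive Corollary~\ref{cor finite set} as an immediate consequence of the preceding corollary together with the finiteness of each individual set $\mathcal{D}_{E,k,i}$, which in turn rests on the classical fact (Antimirov) that the derived term set of an ordinary regular expression is finite. First I would fix an extended expression $E$ and argue by structural induction on $E$ that both $\mathcal{D}_E$ is finite \emph{and} that every ``shifted/substituted'' variant of its immediate subexpressions that can arise through the derivation rules is drawn from a finite pool. The base cases $E = a$, $E = \varepsilon$, $E = \emptyset$ are trivial since $\delta_w(E)$ is empty or a singleton for all but finitely many $w$. The cases $E = E_1 + E_2$, $E = E_1 \cdot E_2$, $E = E_1^*$ are handled exactly as in the classical Antimirov argument: $\mathcal{D}_{E_1+E_2} \subset \{E_1+E_2\} \cup \mathcal{D}_{E_1} \cup \mathcal{D}_{E_2}$, $\mathcal{D}_{E_1 \cdot E_2} \subset \{E_1 \cdot E_2\} \cup (\mathcal{D}_{E_1} \odot E_2) \cup \mathcal{D}_{E_2}$, and $\mathcal{D}_{E_1^*} \subset \{E_1^*\} \cup (\mathcal{D}_{E_1} \odot E_1^*)$, each a finite union of finite sets by the inductive hypothesis.

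The interesting case is $E = \phi(E_1, \ldots, E_n)$. Here I would invoke the preceding corollary, which gives
\begin{equation*}
  \mathcal{D}_E \subset \{E\} \cup \bigcup_{\substack{1 \leq k \leq n,\\ i \in \{1,\ldots,k\} \rightarrow \mathbb{B}}} \mathcal{D}_{E, k, i},
\end{equation*}
so it suffices to show each $\mathcal{D}_{E,k,i}$ is finite and that there are only finitely many of them. The latter is clear: $k$ ranges over the finite set $\{1,\ldots,n\}$, and for each such $k$ the set of interpretations $\{1,\ldots,k\}\rightarrow\mathbb{B}$ has exactly $2^k$ elements, so the indexing set is finite. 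For the former, recall $\mathcal{D}_{E,k,i} = \mathcal{D}_{E_k} \odot \phi'(E_{k+1},\ldots,E_n)$ for the appropriately substituted formula $\phi'$; by the inductive hypothesis $\mathcal{D}_{E_k}$ is finite, and $\odot$-ing a finite set of expressions with a single fixed expression $\phi'(E_{k+1},\ldots,E_n)$ yields a finite set. Hence $\mathcal{D}_E$ is a finite union of finite sets, so finite, which closes the induction. Since $\bigcup_{w\in\Sigma^*}\delta_w(E)$ is by definition $\mathcal{D}_E$, the statement follows.

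The main obstacle — or rather the point requiring care — is bookkeeping the inductive hypothesis in the right strength. A naive induction ``$\mathcal{D}_E$ is finite for every $E$'' is not obviously self-supporting at the $\phi(E_1,\ldots,E_n)$ step, because the derived terms of $E$ involve not $\mathcal{D}_{E_j}$ directly but $\mathcal{D}_{E_j}$ combined with substituted formulas $\phi',\phi''$ acting on shifted tails; one must observe that these substituted formulas form a finite set (bounded by the $2^k$ interpretations, as the preceding proposition and corollary already organize for us) and that each resulting $\mathcal{D}_{E,k,i}$ only consumes the \emph{subexpression} derived-term sets $\mathcal{D}_{E_k}$, to which the induction applies. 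Once that is noted, everything reduces to ``finite union of finite sets,'' and no genuine calculation is needed. I would present the argument compactly, leaning on the corollary for the containment and on structural induction plus Antimirov's classical finiteness for the subexpression cases.
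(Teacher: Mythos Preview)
Your proposal is correct and matches the paper's approach: the paper merely states that the corollary follows ``according to a trivial inductive reasoning,'' and what you have written is precisely that reasoning spelled out --- structural induction with the classical Antimirov bounds for the regular operators and the preceding corollary's inclusion (plus finiteness of the interpretation index set) for the constrained-tilde case. Your extra care in adjoining $\{E\}$ to the bounding union and in checking that the induction hypothesis applies to the subexpressions $E_k$ is exactly the bookkeeping the paper leaves implicit.
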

As a direct consequence, the derived term automaton of an extended expression, defined as usual
with derived terms as states and transitions computed from partial derivation, fulfils finiteness and correction.
\begin{theorem}
  Let \(E\) be an extended expression and \(a\) be a symbol.
  The partial derivative automaton of \( E \) is a finite automaton recognizing \(L(E)\).
\end{theorem}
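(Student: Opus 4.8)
The plan is to establish the two required properties of the partial derivative automaton $\mathcal{A}_E = (\Sigma, \mathcal{D}_E, \{E\}, F, \delta)$ of an extended expression $E$: finiteness of the state set and correctness of the recognized language. Finiteness is immediate from Corollary~\ref{cor finite set}, which already asserts that $\bigcup_{w \in \Sigma^*} \delta_w(E) = \mathcal{D}_E$ is finite, so the real work is correctness, namely $L(\mathcal{A}_E) = L(E)$.

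For correctness, I would argue by a standard chain of equivalences. Fix a word $w = a_1 \cdots a_m$. By definition, $w \in L(\mathcal{A}_E)$ if and only if $\delta(w, \{E\}) \cap F \neq \emptyset$, where $\delta$ is the transition function extended to words and sets of states. First I would observe, by an easy induction on the length of $w$ using the definition $\delta(aw', P) = \delta(w', \delta(a, P))$ together with the transition rule $\delta(a, E') = \delta_a(E')$, that $\delta(w, \{E\}) = \delta_w(E)$, i.e.\ the set of states reached from the initial state by reading $w$ is exactly the set of partial derivatives of $E$ w.r.t.\ $w$ (this matches the extension $\delta_{aw}(E) = \bigcup_{E' \in \delta_a(E)} \delta_w(E')$). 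Then $\delta(w, \{E\}) \cap F \neq \emptyset$ means there is some $E' \in \delta_w(E)$ with $\varepsilon \in L(E')$, which by Proposition~\ref{lang deriv part word} (stating $\bigcup_{E' \in \delta_w(E)} L(E') = w^{-1}(L(E))$) is equivalent to $\varepsilon \in w^{-1}(L(E))$, i.e.\ $w \in L(E)$. Chaining these, $w \in L(\mathcal{A}_E) \Leftrightarrow w \in L(E)$, which is the claim.

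I would also note the one point that needs a word of care: the set $F = \{E' \in \mathcal{D}_E \mid \varepsilon \in L(E')\}$ is well-defined because nullability of an extended expression is decidable — for the multi-tilde case this follows from Corollary~\ref{cor eps in lang}, which reduces $\varepsilon \in \phi(L_1, \ldots, L_n)$ to a Boolean combination of nullability tests on strictly smaller expressions, so a structural induction terminates. (Decidability, not efficiency, is all that is needed here; the NP-hardness noted earlier concerns emptiness, not membership along a fixed word.)

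The main obstacle, such as it is, is not in this theorem itself but is already discharged by the preceding results: the genuinely new content is that $\mathcal{D}_E$ is finite (Corollary~\ref{cor finite set}, resting on the $\mathcal{D}_{E,k,i}$ machinery) and that partial derivation tracks language quotients through the Boolean-formula operator (Proposition~\ref{prop deriv symbol quot} and its word-level lift Proposition~\ref{lang deriv part word}, which in turn rest on Corollary~\ref{cor quot tilde}). Given those, the proof of this theorem is the routine assembly described above; the only care required is to state the induction $\delta(w,\{E\}) = \delta_w(E)$ cleanly and to invoke the nullability characterization for the definition of $F$.
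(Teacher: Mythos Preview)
Your proposal is correct and matches the paper's approach exactly: the paper does not spell out a proof but presents the theorem as a direct consequence of Corollary~\ref{cor finite set} (finiteness) and Proposition~\ref{lang deriv part word} (correctness via the quotient characterization), which is precisely the assembly you describe. Your added remarks on the induction \(\delta(w,\{E\}) = \delta_w(E)\) and on decidability of nullability via Corollary~\ref{cor eps in lang} are sound elaborations of details the paper leaves implicit.
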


\begin{example}
  Let us consider the expression \(E\) defined in Example~\ref{ex calc deriv symb}.
  The derived term automaton of \(E\) is given in Figure~\ref{fig aut ant mirror}.
\end{example}

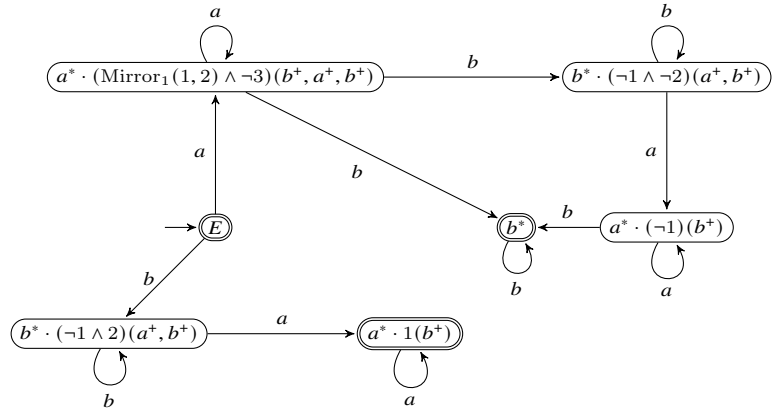
\begin{figure}[H]
  \centering
  \begin{tikzpicture}[node distance=3cm,bend angle=30]
    \node[state, rounded rectangle, accepting, initial left, initial text={}] (1)  {\(E\)} ;
    \node[state, rounded rectangle, below left of=1,node distance = 2cm] (2) {\( b^* \cdot (\neg 1 \wedge  2)(a^+, b^+) \)};
    \node[state, rounded rectangle, right of=2,node distance = 4cm, accepting] (3) {\(a^*\cdot 1(b^+) \)};

    \node[state, rounded rectangle, above of=1,node distance = 2cm] (4) {\( a^* \cdot (\mathrm{Mirror}_1(1, 2) \wedge \neg 3)(b^+, a^+, b^+) \)};
    \node[state, rounded rectangle, right of=4,node distance = 6cm] (5) {\( b^* \cdot (\neg 1 \wedge \neg 2)(a^+, b^+) \)};
    \node[state, rounded rectangle, below of=5,node distance = 2cm] (6) {\( a^* \cdot (\neg 1)(b^+) \)};
    \node[state, rounded rectangle, left of=6,node distance = 2cm, accepting ] (7) {\( b^* \)};

    \path[->]
    (1) edge[ left] node {\( a \)} (4)
    (4) edge[out = 120, in = 60, loop, above] node {\( a \)} ()
    (4) edge[ above] node {\( b \)} (5)
    (5) edge[out = 120, in = 60, loop, above] node {\( b \)} ()
    (5) edge[ left] node {\( a \)} (6)
    (6) edge[out = -120, in = -60, loop, below] node {\( a \)} ()
    (6) edge[ above] node {\( b \)} (7)
    (4) edge[ below left] node {\( b \)} (7)
    (7) edge[out = -120, in = -60, loop, below] node {\( b \)} ()

    (1) edge[ left] node {\( b \)} (2)
    (2) edge[out = -120, in = -60, loop, below] node {\( b \)} ()
    (2) edge[ above] node {\( a \)} (3)
    (3) edge[out = -120, in = -60, loop, below] node {\( a \)} ()
    ;
  \end{tikzpicture}
  \caption{The derived term automaton of \(E\).}%
  \label{fig aut ant mirror}
\end{figure}

\section{The Glushkov Automaton of an Expression}\label{sec Glu}

The Glushkov automaton~\cite{Glu61} is a convenient way to produce an \((n + 1)\)-state
automaton from a \(n\)-width regular expression. Naively, this automaton can be computed in
\(O(n^3)\) time, but this complexity can be reduced to \(O(n^2)\)~\cite{Bru93, PZC96}.

This complexity is also the best known one for the computation of the derived term automaton using
optimized techniques~\cite{CZ01}, reduced from the naive case in \(O(n^5)\)~\cite{CZ01}.

If the derivation technique allows us to solve the membership test without computing the whole
derived term automaton, it is not the case for the Glushkov automaton. Its whole structure is computed
inductively by five functions, \(\mathrm{Pos}\), \(\mathrm{First}\), \(\mathrm{Last}\), \(\mathrm{Follow}\)
and \(\mathrm{Null}\). However, its structure has been deeply studied and characterized~\cite{CZ00}.

Consequently, the extension of the Glushkov method to constrained tildes and the characterization of its structure
is a preliminary step to study the conversion from automata to extended expressions.

\subsection{The Computation for (Classical) Regular Expressions}\label{subsec clas cons}

As recalled before, the computation is based on five functions called \emph{position functions},
defined for any regular expression \(E\) as follows:
\begin{align*}
  \mathrm{Pos}(E)    & = \Sigma_E,                                                              \\
  \mathrm{First}(E)  & = \{a\in\mathrm{Pos}(E) \mid \exists w, aw \in L(E)\},                   \\
  \mathrm{Last}(E)   & = \{a\in\mathrm{Pos}(E) \mid \exists w, wa \in L(E)\},                   \\
  \mathrm{Follow}(E) & = \{(a, b)\in{\mathrm{Pos}(E)}^2 \mid \exists (w, w'), wabw' \in L(E)\}, \\
  \mathrm{Null}(E)   & = \varepsilon \in L(E),
\end{align*}
where \(\Sigma_E\) is the set of symbols that appears in \(E\).

As far as there is no occurrence of \( \emptyset \) in \(E\), these functions can be inductively
computed over the structure of \(E\). As an example, the \(\mathrm{Follow}\) function
is inductively computed as follows:
\begin{gather*}
  \mathrm{Follow}(a) = \mathrm{Follow}(\varepsilon) = \emptyset,\\
  \begin{aligned}
    \mathrm{Follow}(E + F)     & = \mathrm{Follow}(E) \cup \mathrm{Follow}(F),                        \\
    \mathrm{Follow}(E \cdot F) & = \mathrm{Follow}(E) \cup \mathrm{Follow}(F)
    \cup \mathrm{Last}(E) \times \mathrm{First}(F),                                                   \\
    \mathrm{Follow}(E^*)       & = \mathrm{Follow}(E) \cup \mathrm{Last}(E) \times \mathrm{First}(E).
  \end{aligned}
\end{gather*}
Once computed, these functions lead to the definition of the Glushkov automaton.
\begin{definition}\label{def glu aut}
  The \emph{Glushkov automaton} of a regular expression \(E\) is the automaton
  \(G_E=(\mathrm{Pos}(E), \mathrm{Pos}(E) \uplus \{0\} , \{0\}, F, \delta) \)
  defined by
  \begin{align*}
    F            & = \mathrm{Last}(E) \cup \{0\} \mid \mathrm{Null}(E), \\
    \delta(a, 0) & = \{a\} \mid a \in \mathrm{First}(E),                \\
    \delta(a, p) & = \{b\} \mid (a, b) \in \mathrm{Follow}(E),
  \end{align*}
  where \(0\) is not in \(\mathrm{Pos}(E)\) and \(p\) is any state distinct from \(0\).
\end{definition}

However, this automaton does not necessarily recognize \(L(E)\).
Indeed, if a symbol appears twice or more in \(E\), the occurrences
may accept distinct following symbols in the denoted language, or
can be in a last position or not.

As an example, let us consider the expression \(E={(a+b)}^*a(a+b)\).
The first occurrence of the symbol \(b\) makes it belonging to \(\mathrm{First}(E)\),
the second one makes it belonging to \(\mathrm{Last}(E) \).
Therefore, by construction, \(b\) is in \(L(G_E)\), but not in \(L(E)\).

A sufficient condition is when \(E\) is \emph{linear}, \emph{i.e.} when
any symbol appears only once in \(E\).
In this case, \(L(E) = L(G_E)\). The position functions do not mix the data
obtained from several positions, since any symbol appear only at one position.

If \(E\) is not linear, the position automaton is produced as follows:
\begin{enumerate}
  \item the expression \(E\) is \emph{linearized} by indexing distinctively the occurrences of the symbols
        producing an expression denoted by \(E^\sharp \); as an example, if \(E={(a+b)}^*a(a+b)\),
        then \(E^\sharp={(a_1+b_2)}^*a_3(a_4+b_5)\).
        \textbf{N.B.:} Notice that the starting index of the linearization and the order involved
        do not matter. All we care about is having distinct indices.
  \item The automaton \(G_{E^\sharp}\) is then computed as before.
  \item The position automaton \(G_E\) of \(E\) is then obtained by relabelling the transitions of \(G_{E^\sharp}\)
        with unindexed symbols. More formally, the \emph{delinearization function} \(\mathrm{h}\), sending any indexed symbol
        \(a_j\) to the symbol \(a\), is applied over the transitions labels.
\end{enumerate}
The automaton \(G_E\) recognizes \(L(E)\). It is a consequence of the fact that
\begin{itemize}
  \item \(L(E^\sharp)\) is equal to \(L(G_{E^\sharp})\) by construction,
  \item \(\mathrm{h}(L(E^\sharp))\) is equal to \(L(E)\),
  \item \(\mathrm{h}(L(G_{E^\sharp}))\) is equal to \(L(G_{E})\)
\end{itemize}
where \(\mathrm{h}\) is linearly extended to sets and as a (free) monoid morphism over words.

Finally, extending the inductive computation to expressions with occurrences of \(\emptyset\),
by setting
\begin{equation*}
  \mathrm{Pos}(\emptyset) = \mathrm{First}(\emptyset) = \mathrm{Last}(\emptyset) =\mathrm{Follow}(\emptyset) = \emptyset
\end{equation*}
preserves the correction of the computation (but may lead to not accessible states
or to not coaccessible states).

\begin{example}
  Let us consider the expression \( E={(a+b)}^*a(a+b)\) and its linearized version
  \(E^\sharp={(a_1+b_2)}^*a_3(a_4+b_5)\).
  The associated position functions produced the following sets:
  \begin{align*}
    \mathrm{Null}(E^\sharp)   & = \mathrm{false},              \\
    \mathrm{Pos}(E^\sharp)    & = \{a_1, b_2, a_3, a_4, b_5\}, \\
    \mathrm{First}(E^\sharp)  & = \{a_1, b_2, a_3\},           \\
    \mathrm{Last}(E^\sharp)   & = \{a_4, b_5\},                \\
    \mathrm{Follow}(E^\sharp) & = \{
    (a_1, a_1), (a_1, b_2), (a_1, a_3),
    (b_2, a_1),                                                \\
                              & \qquad (b_2, b_2), (b_2, a_3),
    (a_3, b_4), (a_3, b_5)
    \},                                                        \\
  \end{align*}
  leading to the automaton in Figure~\ref{fig glu ex}.
\end{example}

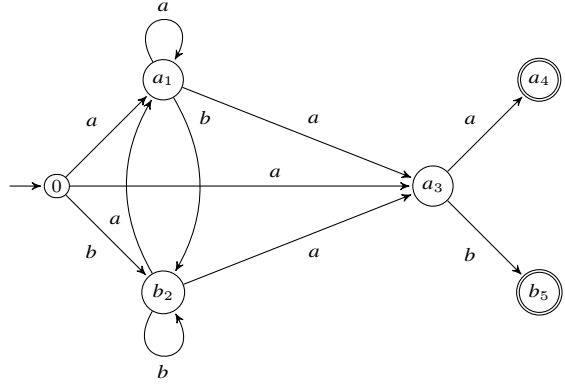
\begin{figure}[H]
  \centering
  \begin{tikzpicture}[node distance=3cm,bend angle=30]
    \node[state, rounded rectangle, initial left, initial text={}] (1)  {\(0\)} ;
    \node[state, above right of=1,node distance = 2cm] (2) {\( a_1 \)};
    \node[state, below right of=1,node distance = 2cm] (3) {\( b_2 \)};
    \node[state, right of=1,node distance = 5cm] (4) {\( a_3 \)};
    \node[state, above right of=4,node distance = 2cm, accepting] (5) {\( a_4 \)};
    \node[state, below right of=4,node distance = 2cm, accepting] (6) {\( b_5 \)};

    \path[->]
    (1) edge[ above left] node {\( a \)} (2)
    (1) edge[ below left] node {\( b \)} (3)
    (1) edge[ pos = 0.6] node {\( a \)} (4)

    (2) edge[out = 120, in = 60, loop, above] node {\( a \)} ()
    (2) edge[ pos = 0.2, bend left, above right] node {\( b \)} (3)
    (2) edge[ above right] node {\( a \)} (4)

    (3) edge[out = -120, in = -60, loop, below] node {\( b \)} ()
    (3) edge[ pos = 0.3, bend left, left] node {\( a \)} (2)
    (3) edge[ below right] node {\( a \)} (4)

    (4) edge[ above left] node {\( a \)} (5)
    (4) edge[ below left] node {\( b \)} (6)

    ;
  \end{tikzpicture}
  \caption{The Glushkov automaton of \(E\).}%
  \label{fig glu ex}
\end{figure}

\subsection{Construction for Constrained Tildes}

First, let us show that the linearization process is \emph{compatible}
with the constrained tildes.
For that purpose, let us consider the operation \(E^{\sharp, j}\) starting the linearization
at the index \(j\).
By convenience, we usually state that \(E^\sharp = E^{\sharp, 1}\).

\begin{lemma}\label{lem compat pos}
  Let \(E\) be an extended expression.
  Then
  \begin{equation*}
    L(E) = \mathrm{h}(L(E^\sharp)).
  \end{equation*}
\end{lemma}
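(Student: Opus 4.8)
The plan is to proceed by structural induction on the extended expression $E$, exactly paralleling the classical argument recalled in Section~\ref{subsec clas cons}, with the constrained multi-tilde case as the only genuinely new item. First I would dispose of the base cases: for $E = a$ we have $E^{\sharp,j} = a_j$ and $\mathrm{h}(L(a_j)) = \mathrm{h}(\{a_j\}) = \{a\} = L(a)$; the cases $E = \emptyset$ and $E = \varepsilon$ are immediate since linearization leaves them untouched and $\mathrm{h}(\emptyset) = \emptyset$, $\mathrm{h}(\{\varepsilon\}) = \{\varepsilon\}$. The key point to set up carefully here is the bookkeeping of indices: linearizing $E_1 \cdot E_2$ as $E_1^{\sharp,j} \cdot E_2^{\sharp,j'}$ where $j'$ is one past the last index used in $E_1^{\sharp,j}$ (and similarly for $+$, $^*$, and the multi-tilde operands), so that all occurrences in $E^{\sharp,j}$ genuinely receive distinct indices. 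This is what makes the induction hypothesis applicable to each subexpression with its own starting index.

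For the classical inductive cases ($+$, $\cdot$, $^*$) I would use that $\mathrm{h}$ is a monoid morphism on words, linearly extended to languages, hence commutes with union, concatenation and Kleene star: $\mathrm{h}(L(E_1^\sharp) \cup L(E_2^\sharp)) = \mathrm{h}(L(E_1^\sharp)) \cup \mathrm{h}(L(E_2^\sharp))$, $\mathrm{h}(L(E_1^\sharp) \cdot L(E_2^\sharp)) = \mathrm{h}(L(E_1^\sharp)) \cdot \mathrm{h}(L(E_2^\sharp))$, and likewise for $^*$; combined with the induction hypothesis on each operand and the definition of $L$ on these constructors, this gives the claim. The new case is $E = \phi(E_1, \ldots, E_n)$, where $E^{\sharp,j} = \phi(E_1^{\sharp,j_1}, \ldots, E_n^{\sharp,j_n})$ for suitably incremented starting indices $j_1, \ldots, j_n$ (note $\phi$ itself, being a Boolean formula over $\{1,\ldots,n\}$, is unaffected by linearization). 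By the definition of the denoted language, $L(E^{\sharp,j}) = \phi(L(E_1^{\sharp,j_1}), \ldots, L(E_n^{\sharp,j_n})) = \bigcup_{i \mid \mathrm{eval}_i(\phi)} i(L(E_1^{\sharp,j_1}), \ldots, L(E_n^{\sharp,j_n}))$, and each $i(\cdot)$ is a concatenation of the $L(E_k^{\sharp,j_k})$'s and copies of $\{\varepsilon\}$.

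The crux is then to push $\mathrm{h}$ through this finite union of concatenations. Since $\mathrm{h}$ commutes with finite union and with concatenation, and since $\mathrm{h}(\{\varepsilon\}) = \{\varepsilon\}$, we get $\mathrm{h}(L(E^{\sharp,j})) = \bigcup_{i \mid \mathrm{eval}_i(\phi)} i(\mathrm{h}(L(E_1^{\sharp,j_1})), \ldots, \mathrm{h}(L(E_n^{\sharp,j_n})))$; applying the induction hypothesis $\mathrm{h}(L(E_k^{\sharp,j_k})) = L(E_k)$ for each $k$ yields $\bigcup_{i \mid \mathrm{eval}_i(\phi)} i(L(E_1), \ldots, L(E_n)) = \phi(L(E_1), \ldots, L(E_n)) = L(E)$, as desired. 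The main obstacle I anticipate is not conceptual but notational: stating the index-shifting convention for $E^{\sharp,j}$ precisely enough that ``distinct indices'' is actually guaranteed across all operands of a multi-tilde (so that no position in $E_k^{\sharp,j_k}$ collides with one in $E_\ell^{\sharp,j_\ell}$ for $k \neq \ell$), since this is what legitimizes treating each operand independently under $\mathrm{h}$. Once that convention is fixed, every step reduces to the fact that $\mathrm{h}$, as a monoid morphism, commutes with all the language operations in play.
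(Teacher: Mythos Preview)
Your proposal is correct and follows essentially the same approach as the paper: both generalize to an arbitrary starting index $j$, proceed by structural induction, and handle the constrained-tilde case by unfolding Equation~\eqref{def language} and pushing $\mathrm{h}$ through the finite union of concatenations via its linearity and monoid-morphism properties. If anything, your write-up is more thorough in spelling out the base cases, the classical inductive cases, and the index-bookkeeping caveat, which the paper leaves largely implicit.
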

\begin{proof}
  Let us show by induction the more general statement that
  \begin{equation*}
    L(E) = \mathrm{h}(L(E^{\sharp, j})).
  \end{equation*}
  We only exhibit the case of the sum and of a constrained tilde,
  the other case being equivalently provable.
  \begin{itemize}
    \item If \(E = F + G\), then by definition there exist two integers \((j_1, j_2)\) such that
          \(E^\sharp = F^{\sharp, j_1} + G^{\sharp, j_2}\).
          \begin{align*}
            L(F + G) & = L(F) \cup L(G)                                                     & (\textbf{Def.: language of a sum})     \\
                     & = \mathrm{h}(L(F^{\sharp, j_1})) \cup \mathrm{h}(L(G^{\sharp, j_2})) & (\textbf{Induction Hypothesis})        \\
                     & = \mathrm{h}(L(F^{\sharp, j_1}) \cup L(G^{\sharp, j_2}))             & (\textbf{linearity of \(\mathrm{h}\)}) \\
                     & = L(F^{\sharp, j_1} + G^{\sharp, j_2})                               & (\textbf{Def.: language of a sum})
          \end{align*}
    \item If \(E = \phi(E_1, \ldots, E_n)\), then by definition there exist \(n\) integers \((j_1, \ldots, j_n)\) such that
          \(E^\sharp = \phi({E_1}^{\sharp, j_1}, \ldots, {E_n}^{\sharp, j_n})\).
          \begin{align*}
            L(E)                                                    & = \bigcup_{i\mid \mathrm{eval}_i(\phi)}  i(L(E_1), \ldots, L(E_n))                                             & (\textbf{Eq.~\eqref{def language}}) \\
                                                                    & = \bigcup_{i\mid \mathrm{eval}_i(\phi)} L'_{i, 1} \cdots L'_{i, n}                                                                                   \\
                                                                    & \qquad \text{where }
            L'_{i, k} =
            \begin{cases}
              \{\varepsilon \} & \text{ if } i(k),  \\
              L(E_k)           & \text{ otherwise.}
            \end{cases}                & (\textbf{Def. of action of \(i \)})                                                                                                                                               \\
                                                                    & = \bigcup_{i\mid \mathrm{eval}_i(\phi)} L'_{i, 1} \cdots L'_{i, n}                                                                                   \\
                                                                    & \qquad \text{where }
            L'_{i, k} =
            \begin{cases}
              \{\varepsilon \}                   & \text{ if } i(k),  \\
              \mathrm{h}(L({E_k}^{\sharp, j_k})) & \text{ otherwise.}
            \end{cases} & (\textbf{Induction hypothesis})                                                                                                                                                         \\
                                                                    & = \bigcup_{i\mid \mathrm{eval}_i(\phi)} \mathrm{h}(L'_{i, 1} \cdots L'_{i, n})                                                                       \\
                                                                    & \qquad \text{where }
            L'_{i, k} =
            \begin{cases}
              \{\varepsilon \}       & \text{ if } i(k),  \\
              L({E_k}^{\sharp, j_k}) & \text{ otherwise.}
            \end{cases}             & (\textbf{\(\mathrm{h}\): monoid morphism})                                                                                                                                           \\
                                                                    & = \mathrm{h}(\bigcup_{i\mid \mathrm{eval}_i(\phi)} L'_{i, 1} \cdots L'_{i, n})                                                                       \\
                                                                    & \qquad \text{where }
            L'_{i, k} =
            \begin{cases}
              \{\varepsilon \}       & \text{ if } i(k),  \\
              L({E_k}^{\sharp, j_k}) & \text{ otherwise.}
            \end{cases}             & (\textbf{linearity of \(\mathrm{h}\)})                                                                                                                                               \\
                                                                    & = \mathrm{h}(\bigcup_{i\mid \mathrm{eval}_i(\phi)}  i(L({E_1}^{\sharp, j_1}), \ldots, L({E_n}^{\sharp, j_n}))) & (\textbf{Eq.~\eqref{def language}}) \\
                                                                    & = L(E^\sharp)
          \end{align*}
  \end{itemize}
\end{proof}

However, the \emph{linearity} of an extended expression is not sufficient anymore in the case of
constrained tildes to ensure the correction of the Glushkov construction.

\begin{example}\label{ex pos tildes}
  Let us consider the extended expression \(E_2 = \mathrm{Mirror}_2(1, 2, 3, 4)(a, b, a, b)\), its
  linearized version \({E_2}^\sharp = \mathrm{Mirror}_2(1, 2, 3, 4)(a_1, b_2, a_3, b_4)\) and
  their denoted languages
  \begin{align*}
    L(E_2)          & = \{\varepsilon, ab, ba, abab\},                 & 
    L({E_2}^\sharp) & = \{\varepsilon, a_1b_4, b_2a_3, a_1b_2a_3b_4\}.
  \end{align*}
  The symbol \(a_3\) is in the set \(\mathrm{Last}({E_2}^\sharp)\) since it ends the word
  \(b_2a_3\). However, following the Glushkov computation of the associated automaton,
  the state \(a_3\) should be final. And since the word \(a_1b_2a_3b_4\) belongs to \(L({E_2}^\sharp)\),
  its prefix should be a path in the position automaton, leading to the word \(a_1b_2a_3\) be recognized
  by the automaton, without belonging to the denoted language of \({E_2}^\sharp\).
\end{example}
The action of
the tildes restricts the following symbols of a given position w.r.t.\ to a context,
provided by the interpretations satisfying the involved Boolean formula. It can
also contextualize the \emph{finality} of a position.
Consequently, the context defined by the different interpretations satisfying the Boolean formula
of a constrained tilde needs to be explicitly considered in the states of the position automaton,
\emph{i.e.} in the definition of the function \(\mathrm{Pos}\), and therefore in all the position functions.
In order to complete this contextualization, we need to \emph{surlinearize} the positions by adding
other indices to represent the considered interpretations.
\textbf{N.B.:} As a direct consequence, the number of positions is not linear (w.r.t.\ the size
if the expression) anymore, and can be exponentially related.

\begin{example}\label{ex 2 pos tildes}
  Let us consider the Boolean formula \(\mathrm{Mirror}_2(1, 2, 3, 4)\).
  There are four interpretations satisfying it:
  \begin{itemize}
    \item \(i_1\), when the four atoms are false,
    \item \(i_2\), when \(1\) and \(4\) are true and the other atoms are false,
    \item \(i_3\), when \(2\) and \(3\) are true and the other atoms are false,
    \item \(i_4\),  when the four atoms are true.
  \end{itemize}
  Consequently, following Equation~\eqref{def language}, the expression \({E_2}^\sharp\)
  of Example~\ref{ex pos tildes} is equivalent to the expression
  \begin{equation*}
    {E'_2}^\sharp = \varepsilon + a_1b_4 + a_2b_3 + a_1b_2a_3b_4,
  \end{equation*}
  since
  \begin{equation*}
    L({E_2}^\sharp)= \bigcup_{i\in \{i_1,i_2,i_3,i_4\}} i(\{a_1\},\{b_2\},\{a_3\},\{b_4\}).
  \end{equation*}
  A surlinearization can be performed by applying the index of the interpretation over each part of
  \({E'_2}^\sharp \), leading to the expression
  \begin{equation*}
    {E_2}^{\musDoubleSharp} = \varepsilon + a_{1, 2}b_{4, 2} + a_{2,3}b_{3, 3} + a_{1, 4}b_{2, 4}a_{3, 4}b_{4, 4}.
  \end{equation*}
\end{example}
Notice that this surlinearization may happen for several nested constrained tildes.
Consequently, the surlinearization will not produce an index based on a couple of integers,
but an index based on a couple of an integer and an integer list, where each step adds an integer in the list.
As an example, the surlinearization would produce the extended expression
\begin{equation*}
  {E_2}^{\musDoubleSharp} = \varepsilon + a_{1, [2]}b_{4, [2]} + a_{2, [3]}b_{3, [3]} + a_{1, [4]}b_{2, [4]}a_{3, [4]}b_{4, [4]}
\end{equation*}
in Example~\ref{ex 2 pos tildes}.
Therefore, from now, we consider that the linearization of an expression indexes each occurrence
of an expression by a couple \((k, [])\), where \(k\) is an integer and \([]\) the empty list.

However, we will not surlinearize a linear expression before the computation of the position functions.
Instead, we surlinearize it \emph{during} the computation, whenever a constrained tilde is reached,
by considering the following computation.
This can be efficient for the implementation, as far as lazy evaluated languages are considered.

\begin{definition}
  Let \(\phi(E_1, \ldots, E_n)\) be a linearized constrained tildes expression
  and \( \{ i_1, \ldots, i_k \} \) the set of the interpretations satisfying \( \phi \).
  We denote by \(\mathrm{dev}_\phi(E_1, \ldots, E_n)\) the extended expression
  \( \sum_{j \in \{1, \ldots, k\}} \mathrm{dev}_{i_j}(E_1, \ldots, E_n) \),
  where
  \(\mathrm{dev}_{i_j}(E_1, \ldots, E_n) = \prod_{1 \leq m \leq n \mid i_j(m)} \mathrm{ind}_j(E_m)\)
  where
  \(\mathrm{ind}_j(E_m)\) is the extended expression obtained by substituting any
  occurrence of \(a_{k, [c_1, \ldots, c_l]}\) with \(a_{k, [j, c_1, \ldots, c_l]}\).
\end{definition}

\begin{example}\label{ex dev phi}
  Let us consider the expression \(E = |1 \leftrightarrow 3|(a,(|1 \rightarrow 2|(b,c)),d)\)
  and its linearized version \(E^\sharp = |1 \leftrightarrow 3|(a_{1, []},(|1 \rightarrow 2|(b_{2, []}, c_{3, []})),d_{4, []})\).
  \begin{align*}
    \mathrm{dev}_{|1 \leftrightarrow 3|}(a_1,(|1 \rightarrow 2|(b_2,c_3)),d_4) & = \varepsilon                                                                          \\
                                                                               & \qquad + (|1 \rightarrow 2|(b_{2, [2]}, c_{3, [2]}))                                   \\
                                                                               & \qquad + a_{1, [3]} \cdot d_{4, [3]}                                                   \\
                                                                               & \qquad + a_{1, [4]} \cdot (|1 \rightarrow 2|(b_{2, [4]}, c_{3, [4]})) \cdot d_{4, [4]} \\
    \mathrm{dev}_{|1 \rightarrow 2|}(b_{2, [2]}, c_{3, [2]})                   & = \varepsilon                                                                          \\
                                                                               & \qquad + b_{2, [2, 2]}                                                                 \\
                                                                               & \qquad + b_{2, [3, 2]} \cdot c_{3, [3, 2]}                                             \\
    \mathrm{dev}_{|1 \rightarrow 2|}(b_{2, [4]}, c_{3, [4]})                   & = \varepsilon                                                                          \\
                                                                               & \qquad + b_{2, [2, 4]}                                                                 \\
                                                                               & \qquad + b_{2, [3, 4]} \cdot c_{3, [3, 4]}                                             \\
  \end{align*}
\end{example}
This transformation preserves the delinearized language, since it is based on
the development associated with the language of a constrained tilde
where the only modification made is adding an index to the list,
that does not modify the action of \(\mathrm{h}\).
\begin{lemma}\label{lem lang dev phi}
  Let \(\phi(E_1, \ldots, E_n)\) be a linearized constrained tilde expression.
  Then
  \begin{equation*}
    \mathrm{h}(L(\phi({E_1}^{\sharp, j}, \ldots, {E_n}^{\sharp, j_n})))
    =
    \mathrm{h}(L(\mathrm{dev}_\phi({E_1}^{\sharp, j}, \ldots, {E_n}^{\sharp, j_n}))).
  \end{equation*}
\end{lemma}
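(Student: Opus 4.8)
The plan is to prove Lemma~\ref{lem lang dev phi} by showing that the linearized constrained tilde expression \(\phi({E_1}^{\sharp, j}, \ldots, {E_n}^{\sharp, j_n})\) and its development \(\mathrm{dev}_\phi({E_1}^{\sharp, j}, \ldots, {E_n}^{\sharp, j_n})\) denote languages that become equal once we forget all index information via \(\mathrm{h}\). First I would unfold the left-hand side using Equation~\eqref{def language}: writing \(F_m = {E_m}^{\sharp, j_m}\), we have
\begin{equation*}
  L(\phi(F_1, \ldots, F_n)) = \bigcup_{i \mid \mathrm{eval}_i(\phi)} L'_{i,1} \cdots L'_{i,n},
  \quad L'_{i,k} = \begin{cases} \{\varepsilon\} & \text{if } i(k), \\ L(F_k) & \text{otherwise.} \end{cases}
\end{equation*}
Since the set of satisfying interpretations is exactly \(\{i_1, \ldots, i_k\}\), this union ranges over the same index set used in the definition of \(\mathrm{dev}_\phi\). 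On the other side, by the definition of the language of a sum and a catenation,
\begin{equation*}
  L(\mathrm{dev}_\phi(F_1, \ldots, F_n)) = \bigcup_{j' \in \{1,\ldots,k\}} \prod_{1 \leq m \leq n \mid i_{j'}(m)} L(\mathrm{ind}_{j'}(F_m)).
\end{equation*}
Wait---I need to be careful with the convention in the definition: the product is over \(m\) such that \(i_{j'}(m)\) holds, but reading Example~\ref{ex dev phi} the intended meaning is the complementary one (the catenation keeps \(E_m\) when \(i_{j'}\) does \emph{not} assign \(\varepsilon\) to position \(m\)); I would first reconcile this reading with Equation~\eqref{def language} so that \(\mathrm{dev}_{i_{j'}}(F_1, \ldots, F_n)\) denotes precisely \(L'_{i_{j'},1}\cdots L'_{i_{j'},n}\) up to surlinearization.

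The key observation is then that \(\mathrm{ind}_{j'}\) only alters index lists: it replaces each occurrence \(a_{k,[c_1,\ldots,c_l]}\) by \(a_{k,[j',c_1,\ldots,c_l]}\). Since \(\mathrm{h}\) sends every indexed symbol \(a_{k,[\cdots]}\) to the bare symbol \(a\) regardless of its index, and \(\mathrm{h}\) is a monoid morphism, we get \(\mathrm{h}(L(\mathrm{ind}_{j'}(F_m))) = \mathrm{h}(L(F_m))\) for every \(m\) and every \(j'\). I would isolate this as the crucial sublemma, proving it by a straightforward structural induction on the linearized expression \(F_m\): the only base case where anything happens is an indexed symbol, and all inductive cases commute with \(\mathrm{h}\) because \(\mathrm{h}\) is linear on sets and a morphism on words (the same facts already exploited in the proof of Lemma~\ref{lem compat pos}).

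With that sublemma in hand, the rest is bookkeeping. Applying \(\mathrm{h}\) to \(L(\mathrm{dev}_\phi(F_1,\ldots,F_n))\), using linearity of \(\mathrm{h}\) to push it through the finite union, and the morphism property to push it through each finite catenation, each factor \(\mathrm{h}(L(\mathrm{ind}_{j'}(F_m)))\) collapses to \(\mathrm{h}(L(F_m))\) (and \(\mathrm{h}(\{\varepsilon\}) = \{\varepsilon\}\) for the positions where \(i_{j'}\) inserts the empty word). This makes the \(j'\)-th summand equal to \(\mathrm{h}(L'_{i_{j'},1}\cdots L'_{i_{j'},n})\), and taking the union over \(j' \in \{1,\ldots,k\}\) and pulling \(\mathrm{h}\) back out yields exactly \(\mathrm{h}\bigl(\bigcup_{i\mid\mathrm{eval}_i(\phi)} L'_{i,1}\cdots L'_{i,n}\bigr) = \mathrm{h}(L(\phi(F_1,\ldots,F_n)))\), which is the desired equality.

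The main obstacle I anticipate is not conceptual but definitional: making sure the product convention in \(\mathrm{dev}_{i_j}\) is stated so that it genuinely matches Equation~\eqref{def language} (the example suggests the product should run over \(m\) with \(\neg i_j(m)\), inserting \(\varepsilon\) elsewhere), and checking that nested constrained tildes inside some \(E_m\) are handled correctly---there, \(\mathrm{ind}_{j'}\) prepends \(j'\) to already-nonempty index lists, and one must verify that the induction hypothesis for the inner expression still delivers \(\mathrm{h}\)-invariance after this prepending, which it does precisely because \(\mathrm{h}\) ignores the entire index list. Once the definition is read correctly, the proof is a routine induction mirroring the one already given for Lemma~\ref{lem compat pos}.
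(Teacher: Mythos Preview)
Your proposal is correct and matches the paper's own justification, which is not a formal proof but the single sentence preceding the lemma: the development \(\mathrm{dev}_\phi\) is precisely the unfolding of Equation~\eqref{def language} as a sum over satisfying interpretations, and the only change is that \(\mathrm{ind}_j\) prepends an index to the list, which \(\mathrm{h}\) ignores. You have essentially written out the argument the paper leaves implicit, and you also correctly spotted that the product in the definition of \(\mathrm{dev}_{i_j}\) should range over \(m\) with \(\neg i_j(m)\) rather than \(i_j(m)\), as Example~\ref{ex dev phi} confirms.
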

Using this operation, the position functions are extended to constrained tildes as follows.
\begin{definition}\label{def glu fun tildes}
  Let \(\phi(E_1, \ldots, E_n)\) be a linearized constrained tilde expression.
  Then
  \begin{align*}
    \mathrm{Pos(E)}    & = \mathrm{Pos}(\mathrm{dev}_\phi(E_1, \ldots, E_n)),    & 
    \mathrm{First(E)}  & = \mathrm{First}(\mathrm{dev}_\phi(E_1, \ldots, E_n)),     \\
    \mathrm{Last(E)}   & = \mathrm{Last}(\mathrm{dev}_\phi(E_1, \ldots, E_n)),   & 
    \mathrm{Follow(E)} & = \mathrm{Follow}(\mathrm{dev}_\phi(E_1, \ldots, E_n)).
  \end{align*}
\end{definition}
The \(\mathrm{Null}\) function can be computed inductively following Corollary~\ref{cor eps in lang}.

\begin{example}
  Let us consider the expression \(E^\sharp\) of Example~\ref{ex dev phi}.
  Then:
  \begin{align*}
    \mathrm{Pos}(E^\sharp)    & =  \{(a,1,[3]),(a,1,[4]),(b,2,[2,2]),(b,2,[2,4]),(b,2,[3,2]),     \\
                              & \qquad (b,2,[3,4]),(c,3,[3,2]),(c,3,[3,4]),(d,4,[3]),(d,4,[4])\}, \\
    \mathrm{First}(E^\sharp)  & = \{(a,1,[3]),(a,1,[4]),(b,2,[2,2]),(b,2,[3,2])\},                \\
    \mathrm{Last}(E^\sharp)   & = \{(b,2,[2,2]),(c,3,[3,2]),(d,4,[3]),(d,4,[4])\}                 \\
    \mathrm{Follow}(E^\sharp) & =
    \{((a,1,[3]),(d,4,[3])) \}
    \cup \{(a,1,4)\} \times \{(b,2,2,4),(b,2,3,4),(d,4,4)\}                                       \\
                              & \qquad \cup \{
    ((b,2,[2,4]),(d,4,[4])),
    ((b,2,[3,2]), (c,3,[3,2])),                                                                   \\
                              & \qquad \quad ((b,2,[3,4]),(c,3,[3,4])),
    ((c,3,[3,4]), (d,4,[4]))
    \}.
  \end{align*}
\end{example}
The classical construction, detailed in Subsection~\ref{subsec clas cons}, can then be applied to any extended expression \(E\):
\begin{enumerate}
  \item the expression \(E\) is \emph{linearized} by indexing distinctively the occurrences of the symbols
        with a couple made of an integer and the empty list,
        producing an expression denoted by \(E^\sharp \).
  \item The position functions are computed using Definition~\ref{def glu fun tildes}.
  \item The automaton \(G_{E^\sharp}\) is then computed as before following Definition~\ref{def glu aut};
  \item The position automaton \(G_E\) of \(E\) is then obtained by relabelling the transitions of \(G_{E^\sharp}\)
        with unindexed symbols.
\end{enumerate}

\begin{example}
  Let us consider the expression \(E\) of Example~\ref{ex dev phi}.
  The Glushkov automaton of \(E\) is presented in Figure~\ref{fig glu cons tildes}.
\end{example}

\begin{figure}[H]
  \centering
  \begin{tikzpicture}[node distance=3cm,bend angle=30]
    \node[state, rounded rectangle, initial left, initial text={}, accepting] (1)  {\(0\)} ;

    \node[state, rounded rectangle, above of=1,node distance = 2cm, accepting] (2) {\( b_{(2, [2,2])} \)};

    \node[state, rounded rectangle, above right of=1,node distance = 2cm] (3) {\( a_{(1, [3])} \)};
    \node[state, rounded rectangle, right of=3,node distance = 2cm, accepting] (4) {\( d_{(4, [3])} \)};

    \node[state, rounded rectangle, below of=1,node distance = 2cm] (5) {\( b_{(2, [3,2])} \)};
    \node[state, rounded rectangle, right of=5,node distance = 2cm, accepting] (6) {\( c_{(3, [3,2])} \)};

    \node[state, rounded rectangle, right of=1,node distance = 2cm] (7) {\( a_{(1, [4])} \)};
    \node[state, rounded rectangle, right of=7,node distance = 3cm] (8) {\( b_{(2, [2, 4])} \)};
    \node[state, rounded rectangle, right of=8,node distance = 3cm, accepting] (9) {\( d_{(4, [4])} \)};

    \node[state, rounded rectangle, below right of=7,node distance = 2cm] (10) {\( b_{(2, [3, 4])} \)};
    \node[state, rounded rectangle, below left of=9,node distance = 2cm] (11) {\( c_{(3, [3, 4])} \)};

    \path[->]
    (1) edge[ left] node {\( b \)} (2)

    (1) edge[ above left] node {\( a \)} (3)
    (3) edge[ above] node {\( d \)} (4)

    (1) edge[ left] node {\( b \)} (5)
    (5) edge[ above] node {\( c \)} (6)

    (1) edge[ above] node {\( a \)} (7)
    (7) edge[ above] node {\( b \)} (8)
    (8) edge[ above] node {\( d \)} (9)

    (7) edge[ below left] node {\( b \)} (10)
    (10) edge[ below] node {\( c \)} (11)
    (11) edge[ below right] node {\( d \)} (9)

    (7) edge[ above, bend left] node {\( d \)} (9)





    ;
  \end{tikzpicture}
  \caption{The Glushkov automaton of \(E\).}%
  \label{fig glu cons tildes}
\end{figure}
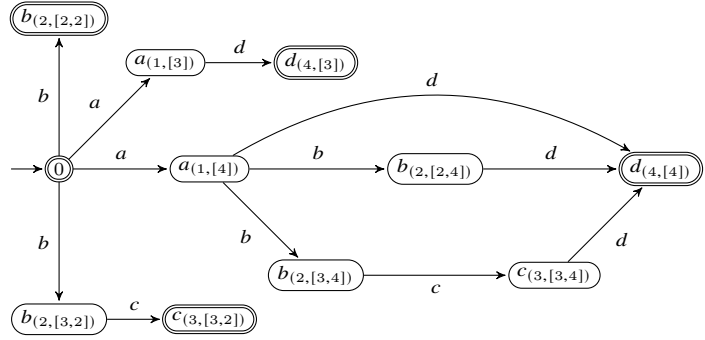

\subsection{Correction of the Construction}

In order to prove the correction of the construction, let us consider the \emph{surlinearized expression}
\(E^{\musDoubleSharp} = \mathrm{dev}(E^\sharp)\) of an extended expression \(E\) computed inductively as follows:
\begin{gather*}
  \begin{aligned}
    \mathrm{dev}(a_{j, js}) & = a_{j, js},                         & \mathrm{dev}(\emptyset) & = \emptyset,                             & \mathrm{dev}(\varepsilon) & = \varepsilon,       \\
    \mathrm{dev}(F+G)       & = \mathrm{dev}(F) + \mathrm{dev}(G), & \mathrm{dev}(F\cdot G)  & = \mathrm{dev}(F) \cdot \mathrm{dev}(G), & \mathrm{dev}(F^*)         & = \mathrm{dev}(F)^*,
  \end{aligned}\\
  \begin{aligned}
    \mathrm{dev}(\phi(E_1, \ldots, E_n)) & = \mathrm{dev}(\mathrm{dev}_\phi(E_1, \ldots, E_n)).
  \end{aligned}
\end{gather*}
Notice that by construction, \(E^{\musDoubleSharp}\) is a regular expression.

\begin{example}
  Let us consider the expression \(E = |1 \leftrightarrow 3|(a,(|1 \rightarrow 2|(b,c)),d)\)
  and its linearized version \(E^\sharp = |1 \leftrightarrow 3|(a_{1, []},(|1 \rightarrow 2|(b_{2, []}, c_{3, []})),d_{4, []})\)
  of Example~\ref{ex dev phi}. Then:
  \begin{align*}
    E^{\musDoubleSharp} & = \mathrm{dev}(\varepsilon                                                              \\
                        & \qquad + (|1 \rightarrow 2|(b_{2, [2]}, c_{3, [2]}))                                    \\
                        & \qquad + a_{1, [3]} \cdot d_{4, [3]}                                                    \\
                        & \qquad + a_{1, [4]} \cdot (|1 \rightarrow 2|(b_{2, [4]}, c_{3, [4]})) \cdot d_{4, [4]}) \\
                        & = \varepsilon                                                                           \\
                        & \qquad + (\varepsilon                                                                   \\
                        & \qquad \qquad + b_{2, [2, 2]}                                                           \\
                        & \qquad \qquad + b_{2, [3, 2]} \cdot c_{3, [3, 2]})                                      \\
                        & \qquad + a_{1, [3]} \cdot d_{4, [3]}                                                    \\
                        & \qquad + a_{1, [4]} \cdot (                                                             \\
                        & \qquad \qquad \varepsilon                                                               \\
                        & \qquad \qquad + b_{2, [2, 4]}                                                           \\
                        & \qquad \qquad + b_{2, [3, 4]} \cdot c_{3, [3, 4]} )\cdot d_{4, [4]}
  \end{align*}
\end{example}
By a trivial induction over the structure of extended expressions, it can be shown that the surlinearization preserves
the position functions.
\begin{proposition}
  Let \(E\) be an extended expression.
  Then:
  \begin{gather*}
    \begin{aligned}
      \mathrm{Pos}(E^{\musDoubleSharp})   & = \mathrm{Pos}(E^\sharp),   & 
      \mathrm{First}(E^{\musDoubleSharp}) & = \mathrm{First}(E^\sharp), & 
      \mathrm{Last}(E^{\musDoubleSharp})  & = \mathrm{Last}(E^\sharp),
    \end{aligned}\\
    \begin{aligned}
      \mathrm{Follow}(E^{\musDoubleSharp}) & = \mathrm{Follow}(E^\sharp), & 
      \mathrm{Null}(E^{\musDoubleSharp})   & = \mathrm{Null}(E^\sharp).
    \end{aligned}
  \end{gather*}
\end{proposition}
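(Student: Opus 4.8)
The plan is to prove the five identities together by induction, but \emph{not} by a plain induction on the structure of $E$: in the constrained-tilde clause the expression that drives the position functions, namely $\mathrm{dev}_\phi(E_1,\dots,E_n)$ from Definition~\ref{def glu fun tildes}, is not a sub-expression of $\phi(E_1,\dots,E_n)$ — it re-copies and re-indexes the operands. So I would restate the claim over all \emph{linearized} extended expressions $L$ (those whose occurrences are indexed by a pair (integer, list)) as $\mathrm{Pos}(\mathrm{dev}(L))=\mathrm{Pos}(L)$, and likewise for $\mathrm{First}$, $\mathrm{Last}$, $\mathrm{Follow}$ and $\mathrm{Null}$; the proposition is then the instance $L=E^\sharp$, since $E^{\musDoubleSharp}=\mathrm{dev}(E^\sharp)$. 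To make the argument well-founded I would introduce the \emph{constrained-tilde nesting depth} $d$, given by $d(a_{j,js})=d(\varepsilon)=d(\emptyset)=0$, $d(F+G)=d(F\cdot G)=\max(d(F),d(G))$, $d(F^{*})=d(F)$ and $d(\phi(E_1,\dots,E_n))=1+\max_m d(E_m)$, and argue by induction on $d(L)$, with an inner structural induction on $L$ for each fixed depth.

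Before the induction I would record two syntactic facts obtained directly from the definition of $\mathrm{dev}$: it fixes $a_{j,js}$, $\varepsilon$, $\emptyset$ and commutes with $+$, $\cdot$, ${}^{*}$; and on a constrained tilde $\mathrm{dev}(\phi(E_1,\dots,E_n))=\mathrm{dev}(\mathrm{dev}_\phi(E_1,\dots,E_n))$. From the first, a linearized $L$ with $d(L)=0$ contains no constrained tilde, so $\mathrm{dev}(L)=L$ and all five identities are trivial; this is the base case. I would also record the key numerical fact that keeps the outer induction well-founded: since $\mathrm{ind}_j$ only prepends an integer to index lists and leaves the operator tree untouched, $d(\mathrm{ind}_j(E_m))=d(E_m)$, hence $\mathrm{dev}_\phi(E_1,\dots,E_n)$ — a sum of products of expressions $\mathrm{ind}_j(E_m)$ and of $\varepsilon$'s — satisfies $d(\mathrm{dev}_\phi(E_1,\dots,E_n))\le\max_m d(E_m)=d(\phi(E_1,\dots,E_n))-1$, i.e.\ stripping the outermost tilde strictly drops the depth.

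For the inductive step, fix $d\ge 1$, assume the statement for every linearized expression of depth $<d$, and run a structural induction on linearized $L$ with $d(L)=d$. If $L=F+G$, $F\cdot G$ or $F^{*}$, each of $\mathrm{Pos}$, $\mathrm{First}$, $\mathrm{Last}$, $\mathrm{Follow}$, $\mathrm{Null}$ is computed by one and the same inductive clause on $L$ and on $\mathrm{dev}(L)$ (as $\mathrm{dev}$ commutes with the constructor), so the inner induction hypothesis on $F$ and $G$ closes the case. If $L=\phi(L_1,\dots,L_n)$, set $G:=\mathrm{dev}_\phi(L_1,\dots,L_n)$; then $d(G)<d$, so the outer induction hypothesis applies to $G$ and yields $\mathrm{Pos}(\mathrm{dev}(G))=\mathrm{Pos}(G)$, and likewise for $\mathrm{First}$, $\mathrm{Last}$, $\mathrm{Follow}$. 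Definition~\ref{def glu fun tildes} is exactly $\mathrm{Pos}(L)=\mathrm{Pos}(G)$, $\mathrm{First}(L)=\mathrm{First}(G)$, $\mathrm{Last}(L)=\mathrm{Last}(G)$, $\mathrm{Follow}(L)=\mathrm{Follow}(G)$, while the second syntactic fact gives $\mathrm{dev}(L)=\mathrm{dev}(G)$; chaining these yields the four identities for $L$. For $\mathrm{Null}$ I would instead use that $\mathrm{Null}(M)$ is true exactly when $\varepsilon\in L(M)$ (its computation via Corollary~\ref{cor eps in lang} being a correct recursion for that predicate), that $\mathrm{dev}$ preserves the image of the denoted language under $\mathrm{h}$ — by an induction paralleling Lemma~\ref{lem compat pos} and invoking Lemma~\ref{lem lang dev phi} at the tilde clause — and that $\varepsilon\in L\Leftrightarrow\varepsilon\in\mathrm{h}(L)$ since $\mathrm{h}$ sends letters to letters; hence $\mathrm{Null}(\mathrm{dev}(L))=\mathrm{Null}(L)$.

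The main obstacle is precisely this reorganization: recognizing that structural induction fails at the constrained-tilde clause because $\mathrm{dev}_\phi$ returns something larger than its input, and fixing it by inducting on the nesting depth, which makes the sole non-routine verification of the proof the fact that $\mathrm{dev}_\phi$ strictly decreases $d$. Everything else — that $\mathrm{dev}$ commutes with $+$, $\cdot$, ${}^{*}$, the reading off of Definition~\ref{def glu fun tildes}, and the monoid-morphism argument for $\mathrm{Null}$ — is routine bookkeeping.
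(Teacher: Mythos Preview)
Your proposal is correct and follows essentially the same approach as the paper: the paper also proves the generalized statement \(\mathrm{First}(\mathrm{dev}(E^{\sharp,j}))=\mathrm{First}(E^{\sharp,j})\) ``by induction over the structure of the expressions and by recurrence over the number of constrained tildes,'' which is exactly your outer induction on nesting depth with an inner structural induction, and its tilde case is your chain \(\mathrm{First}(L)=\mathrm{First}(G)=\mathrm{First}(\mathrm{dev}(G))=\mathrm{First}(\mathrm{dev}(L))\). Your write-up is more careful than the paper's in two respects: you make the well-foundedness explicit by proving \(d(\mathrm{dev}_\phi(E_1,\dots,E_n))<d(\phi(E_1,\dots,E_n))\), and you treat \(\mathrm{Null}\) separately via the semantic shortcut \(\mathrm{Null}(M)\Leftrightarrow\varepsilon\in\mathrm{h}(L(M))\) rather than folding it into the same induction (which is appropriate, since \(\mathrm{Null}\) on a constrained tilde is defined through Corollary~\ref{cor eps in lang} rather than through \(\mathrm{dev}_\phi\)).
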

\begin{proof}
  Let us prove the case of the \(\mathrm{First}\) function by induction over the structure of the expressions
  and by recurrence over the number of constrained tildes, the other cases being equivalently provable.
  As in the proof of Lemma~\ref{lem compat pos}, let us show the more general result
  that \(\mathrm{First}(\mathrm{dev}(E^{\sharp, j})) = \mathrm{First}(E^{\sharp, j})\).
  Once again, we restrict the proof to two cases, the other ones being equivalently provable.
  \begin{itemize}
    \item If \(E^{\sharp, j} = F^{\sharp, j} + G^{\sharp, j_2}\), then
          \begin{align*}
            \mathrm{First}(E^{\sharp, j}) & = \mathrm{First}(F^{\sharp, j}) \cup \mathrm{First}(G^{\sharp, j_2})                             & (\textbf{Def.: \(\mathrm{First}\)}) \\
                                          & = \mathrm{First}(\mathrm{dev}(F^{\sharp, j})) \cup \mathrm{First}(\mathrm{dev}(G^{\sharp, j_2})) & (\textbf{Induction Hypothesis})     \\
                                          & = \mathrm{First}(\mathrm{dev}(F^{\sharp, j}) + \mathrm{dev}(G^{\sharp, j_2}))                    & (\textbf{Def.: \(\mathrm{First}\)}) \\
                                          & = \mathrm{First}(\mathrm{dev}(E^{\sharp, j}))                                                    & (\textbf{Def.: \(\mathrm{dev}\)})
          \end{align*}
    \item If \(E^{\sharp, j} = \phi({E_1}^{\sharp, j}, \ldots, {E_n}^{\sharp, j_n})\), then
          \begin{align*}
            \mathrm{First}(E^{\sharp, j}) & = \mathrm{First}(\mathrm{dev}_\phi({E_1}^{\sharp, j}, \ldots, {E_n}^{\sharp, j_n}))               & (\textbf{Def.: \(\mathrm{First}\)}) \\
                                          & = \mathrm{First}(\mathrm{dev}(\mathrm{dev}_\phi({E_1}^{\sharp, j}, \ldots, {E_n}^{\sharp, j_n}))) & (\textbf{Induction Hypothesis})     \\
                                          & = \mathrm{First}(\mathrm{dev}(E^{\sharp, j}))                                                     & (\textbf{Def.: \(\mathrm{dev}\)})
          \end{align*}\\
  \end{itemize}
\end{proof}
As a direct consequence, since \(E^{\musDoubleSharp}\) is a regular expression, the classical
Glushkov construction for regular expressions for \(E^{\musDoubleSharp}\) coincides
with the Glushkov automaton of the extended expression \(E^\sharp \).
\begin{corollary}
  Let \(E\) be an extended expression.
  Then \(G_{E^\sharp}\) and \(G_{E^{\musDoubleSharp}}\) are equal.
\end{corollary}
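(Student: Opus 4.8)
The plan is to observe that the Glushkov automaton $G_{E'}$ of any expression $E'$---be it a classical regular expression or an extended one---is entirely determined by the five position functions $\mathrm{Pos}$, $\mathrm{First}$, $\mathrm{Last}$, $\mathrm{Follow}$ and $\mathrm{Null}$ evaluated on $E'$, through Definition~\ref{def glu aut}: the state set is $\mathrm{Pos}(E') \uplus \{0\}$, the unique initial state is $0$, the set of final states is $\mathrm{Last}(E') \cup (\{0\} \mid \mathrm{Null}(E'))$, the transitions leaving $0$ are read off from $\mathrm{First}(E')$, and all the remaining transitions are read off from $\mathrm{Follow}(E')$. Hence, to prove that $G_{E^\sharp}$ and $G_{E^{\musDoubleSharp}}$ coincide, it suffices to show that these five functions take the same values on $E^\sharp$ and on $E^{\musDoubleSharp}$.

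But that is exactly the content of the preceding Proposition, which asserts $\mathrm{Pos}(E^{\musDoubleSharp}) = \mathrm{Pos}(E^\sharp)$ and the analogous equalities for $\mathrm{First}$, $\mathrm{Last}$, $\mathrm{Follow}$ and $\mathrm{Null}$. I would therefore argue component by component: since $\mathrm{Pos}(E^{\musDoubleSharp}) = \mathrm{Pos}(E^\sharp)$ the two automata have the same state set and the same initial state; since $\mathrm{Last}$ and $\mathrm{Null}$ agree they have the same set of final states; and since $\mathrm{First}$ and $\mathrm{Follow}$ agree they have the same transition function. Consequently the two automata are equal as $5$-tuples. It is worth recalling here that $E^{\musDoubleSharp} = \mathrm{dev}(E^\sharp)$ is by construction an ordinary regular expression, so ``the Glushkov automaton of $E^{\musDoubleSharp}$'' is unambiguously the classical object of Definition~\ref{def glu aut} computed with the ordinary inductive position functions, while ``the Glushkov automaton of $E^\sharp$'' is the object built with the extended position functions of Definition~\ref{def glu fun tildes}---and Definition~\ref{def glu fun tildes} was designed precisely so that, at each constrained tilde, the extended functions are those of $\mathrm{dev}_\phi$, which the Proposition relates to $\mathrm{dev}$.

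I expect essentially no obstacle: this is a bookkeeping corollary of the Proposition. The only point deserving a sentence of care is that Definition~\ref{def glu aut} inspects nothing of $E'$ beyond the five position functions---in particular not its syntactic shape---so that equality of those functions genuinely forces the two automata to be literally identical; and that the surlinearization indices occurring in $\mathrm{Pos}(E^\sharp)$ (an integer together with an integer list) are the very same symbols occurring in $\mathrm{Pos}(E^{\musDoubleSharp})$, so there is no silent relabelling between the two state sets. Both facts are immediate from $E^{\musDoubleSharp} = \mathrm{dev}(E^\sharp)$ together with the Proposition, so the proof reduces to quoting that Proposition and Definition~\ref{def glu aut}.
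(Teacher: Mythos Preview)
Your proposal is correct and matches the paper's approach: the paper states this corollary as a direct consequence of the preceding Proposition (equality of the five position functions on $E^\sharp$ and $E^{\musDoubleSharp}$), together with the observation that $E^{\musDoubleSharp}$ is an ordinary regular expression, without giving further details. Your write-up is in fact more explicit than the paper's, spelling out why equality of the position functions forces equality of each component of the $5$-tuple in Definition~\ref{def glu aut}.
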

This allows us to characterize the language recognized by \(G_{E^\sharp}\).
\begin{corollary}
  Let \(E\) be an extended expression.
  Then \(L(G_{E^\sharp})\) is equal to \(L(E^{\musDoubleSharp})\).
\end{corollary}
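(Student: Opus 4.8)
The plan is to obtain the statement as an instance of the classical Glushkov correctness theorem, applied to the surlinearized regular expression \(E^{\musDoubleSharp}\). By the immediately preceding corollary the automata \(G_{E^\sharp}\) and \(G_{E^{\musDoubleSharp}}\) are equal, hence \(L(G_{E^\sharp}) = L(G_{E^{\musDoubleSharp}})\); and since, by construction, \(E^{\musDoubleSharp}\) contains no constrained tilde, it is an ordinary regular expression. So it suffices to prove \(L(G_{E^{\musDoubleSharp}}) = L(E^{\musDoubleSharp})\), and, as recalled in Subsection~\ref{subsec clas cons} (including the harmless extension of the computation to occurrences of \(\emptyset\)), this equality holds as soon as \(E^{\musDoubleSharp}\) is \emph{linear}, i.e.\ as soon as each of its symbol occurrences carries a distinct index. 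The whole argument therefore reduces to establishing the linearity of \(E^{\musDoubleSharp}\) and then chaining \(L(G_{E^\sharp}) = L(G_{E^{\musDoubleSharp}}) = L(E^{\musDoubleSharp})\).

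To prove that \(E^{\musDoubleSharp}\) is linear I would argue by induction on the structure of \(E\) together with a recurrence on the number of nested constrained tildes, in the style of the proof of the preceding proposition, and I would strengthen the statement to: for any linearized extended expression \(F\) (all of whose occurrences carry distinct indices and whose integer components are pairwise distinct), the regular expression \(\mathrm{dev}(F)\) is linear, and the integer components of its occurrences are among those of \(F\). The cases of an atom, \(\emptyset\), \(\varepsilon\), sum, product and star are routine: \(\mathrm{dev}\) acts there as a plain recursion, and two disjoint sub-expressions of a linearized expression have disjoint integer components, hence (by the second invariant) disjoint index sets after applying \(\mathrm{dev}\). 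The real case is \(E = \phi(E_1,\dots,E_n)\), where \(\mathrm{dev}(\phi(E_1^{\sharp,j_1},\dots,E_n^{\sharp,j_n})) = \mathrm{dev}(\mathrm{dev}_\phi(E_1^{\sharp,j_1},\dots,E_n^{\sharp,j_n}))\), with \(\mathrm{dev}_\phi\) a sum, over the satisfying interpretations \(i_m\), of catenations of expressions \(\mathrm{ind}_m(E_\ell^{\sharp,j_\ell})\). Distributing the outer \(\mathrm{dev}\) over this sum and these catenations and applying the induction hypothesis to each \(\mathrm{dev}(\mathrm{ind}_m(E_\ell^{\sharp,j_\ell}))\) (licit since \(\mathrm{ind}_m\) only prepends \(m\) to list components, leaving integer components untouched, so \(\mathrm{ind}_m(E_\ell^{\sharp,j_\ell})\) is again linearized), one gets that each summand is linear (its factors have pairwise disjoint integer components, inherited from the operands of \(\phi\)), and it remains to see that occurrences living in distinct summands \(i_m\neq i_{m'}\) have distinct indices. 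This is the crux: two such occurrences can share an integer component only if they descend from the same occurrence of \(E^\sharp\); at the first constrained tilde that duplicated them into the two summands they received the distinct interpretation indices \(m\) and \(m'\) at the same relative position of their list component, and every subsequent step of \(\mathrm{dev}\) only prepends further entries, hence that position is never altered, so the two list components differ there.

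Once linearity of \(E^{\musDoubleSharp}\) is established, the classical result gives \(L(G_{E^{\musDoubleSharp}}) = L(E^{\musDoubleSharp})\), and together with the preceding corollary this yields \(L(G_{E^\sharp}) = L(E^{\musDoubleSharp})\). The only real obstacle is the linearity step, and within it the point just highlighted: one must choose the right invariant (distinctness of the full integer/list pairs, not of the integers alone, since each \(\mathrm{dev}_\phi\) duplicates an operand once per satisfying interpretation) and keep track of the fact that the list component of an occurrence records the sequence of interpretation choices that copied it, which is preserved because \(\mathrm{dev}\) and \(\mathrm{ind}_m\) only ever prepend to that list. Everything else is either bookkeeping or a direct appeal to results established earlier in the paper.
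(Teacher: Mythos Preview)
Your proposal is correct and follows exactly the route the paper has in mind: the paper states this corollary as an immediate consequence of \(G_{E^\sharp}=G_{E^{\musDoubleSharp}}\) together with the fact that \(E^{\musDoubleSharp}\) is a (classical) regular expression, so that the classical Glushkov correctness applies. The only difference is that you spell out the linearity of \(E^{\musDoubleSharp}\) explicitly (which the paper leaves implicit, the indexing scheme being designed precisely to guarantee it); your inductive argument for linearity is sound and is the expected way to fill in this detail.
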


Another important property of surlinearization is that after delinearization of the languages denoted by \(E^\sharp \) or \(E^{\musDoubleSharp}\), using the function \(\mathrm{h}\),
we obtain the language of the starting expression.
\begin{proposition}
  Let \(E\) be an extended expression.
  Then
  \(\mathrm{h}(L(E^\sharp)) = \mathrm{h}(L(E^{\musDoubleSharp}))\).
\end{proposition}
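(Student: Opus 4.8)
The plan is to prove, by induction on the extended expression \(E\), the slightly more general statement that for \emph{every} linearization \(F\) of \(E\) — that is, \(F = E^{\sharp, j}\) for some starting index \(j\), and more generally any expression obtained from such an \(F\) by a finite number of re-indexings \(\mathrm{ind}_{j}\) — one has \(\mathrm{h}(L(F)) = \mathrm{h}(L(\mathrm{dev}(F)))\). Taking \(F = E^\sharp\) then gives the proposition, since \(E^{\musDoubleSharp} = \mathrm{dev}(E^\sharp)\) by definition. The base cases \(F \in \{a_{j, js}, \emptyset, \varepsilon\}\) are immediate because \(\mathrm{dev}\) is the identity on them. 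For \(F = F_1 + F_2\), \(F = F_1 \cdot F_2\) and \(F = F_1^*\), the map \(\mathrm{dev}\) distributes over the constructor, the language operators commute with \(+, \cdot, {}^*\) in the usual way, and \(\mathrm{h}\), being a free monoid morphism, commutes with union, concatenation and Kleene star when extended to languages; so the induction hypothesis applied to the strictly smaller \(F_1, F_2\) closes these cases directly.

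The substantive case is \(F = \phi(F_1, \ldots, F_n)\), where each \(F_m\) is a linearization of a proper subexpression \(E_m\) of \(E\), so that \(\mathrm{dev}(F) = \mathrm{dev}(\mathrm{dev}_\phi(F_1, \ldots, F_n))\). First, Lemma~\ref{lem lang dev phi} gives \(\mathrm{h}(L(F)) = \mathrm{h}(L(\mathrm{dev}_\phi(F_1, \ldots, F_n)))\), so it remains to show \(\mathrm{h}(L(\mathrm{dev}_\phi(F_1, \ldots, F_n))) = \mathrm{h}(L(\mathrm{dev}(\mathrm{dev}_\phi(F_1, \ldots, F_n))))\). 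By its definition, \(\mathrm{dev}_\phi(F_1, \ldots, F_n)\) is a finite sum of finite products of expressions \(\mathrm{ind}_j(F_m)\); since \(\mathrm{ind}_j\) only prepends an integer to the index lists of the symbols, leaving both the symbols and the syntactic structure untouched, each \(\mathrm{ind}_j(F_m)\) is still a (re-indexed) linearization of the proper subexpression \(E_m\), and the induction hypothesis applies: \(\mathrm{h}(L(\mathrm{ind}_j(F_m))) = \mathrm{h}(L(\mathrm{dev}(\mathrm{ind}_j(F_m))))\). As \(\mathrm{dev}\) distributes over the \(+\) and \(\cdot\) occurring in \(\mathrm{dev}_\phi(F_1, \ldots, F_n)\) and \(\mathrm{h}\) commutes with union and concatenation, combining these equalities termwise yields the required identity, and chaining with the Lemma closes the case. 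Finally, instantiating the general statement at \(F = E^\sharp\) gives \(\mathrm{h}(L(E^\sharp)) = \mathrm{h}(L(\mathrm{dev}(E^\sharp))) = \mathrm{h}(L(E^{\musDoubleSharp}))\).

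The main obstacle is ensuring that the induction is well founded: \(\mathrm{dev}(\phi(F_1, \ldots, F_n))\) re-expands to \(\mathrm{dev}\) applied to \(\mathrm{dev}_\phi(F_1, \ldots, F_n)\), which still contains constrained tildes (the ones nested inside the \(F_m\)), so a naive structural induction on the recursion of \(\mathrm{dev}\) does not terminate syntactically. The remedy is to carry the induction on the size (number of nodes) of the underlying extended expression \(E\) — equivalently a double induction with an outer recurrence on the number of constrained-tilde operators and an inner structural induction, exactly as in the proof of Lemma~\ref{lem compat pos} — together with the observation that \(\mathrm{ind}_j\) preserves this measure, so that the induction hypothesis is legitimately available for every \(\mathrm{ind}_j(F_m)\). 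Everything else reduces to bookkeeping with the monoid-morphism properties of \(\mathrm{h}\) and the distributivity of \(\mathrm{dev}\), both already available in the paper.
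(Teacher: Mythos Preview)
Your proposal is correct and follows essentially the same route as the paper: both arguments generalize to arbitrary linearizations \(E^{\sharp,j}\), use Lemma~\ref{lem lang dev phi} to pass from \(\phi(F_1,\ldots,F_n)\) to \(\mathrm{dev}_\phi(F_1,\ldots,F_n)\), and then appeal to an induction hypothesis to handle the remaining \(\mathrm{dev}\). The only difference is in how the induction is set up: the paper applies the hypothesis to \(\mathrm{dev}_\phi(F_1,\ldots,F_n)\) as a whole, relying on a recurrence over the nesting depth of constrained tildes, whereas you unfold \(\mathrm{dev}_\phi\) into its sum-of-products form and apply the hypothesis to each factor \(\mathrm{ind}_j(F_m)\), measuring progress by the size of the underlying unlinearized expression and explicitly closing the generalization under \(\mathrm{ind}_j\); your version makes the well-foundedness concern more transparent but is otherwise the same proof.
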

\begin{proof}
  We proceed by induction over the structure of the expressions
  and by recurrence over the number of constrained tildes.
  As in the proof of Lemma~\ref{lem compat pos}, let us show the more general result
  that \(\mathrm{h}(L(\mathrm{dev}(E^{\sharp, j}))) = \mathrm{h}(L(E^{\sharp, j}))\).
  Once again, we restrict the proof to two cases, the other ones being equivalently provable.
  \begin{itemize}
    \item If \(E^{\sharp, j} = F^{\sharp, j} + G^{\sharp, j_2}\), then
          \begin{align*}
            \mathrm{h}(L(E^{\sharp, j})) & = \mathrm{h}(L(F^{\sharp, j}) \cup L(G^{\sharp, j_2}))                                         & (\textbf{Def.: language of sum})       \\
                                         & = \mathrm{h}(L(F^{\sharp, j})) \cup \mathrm{h}(L(G^{\sharp, j_2}))                             & (\textbf{linearity of \(\mathrm{h}\)}) \\
                                         & = \mathrm{h}(L(\mathrm{dev}(F^{\sharp, j}))) \cup \mathrm{h}(L(\mathrm{dev}(G^{\sharp, j_2}))) & (\textbf{Induction Hypothesis})        \\
                                         & = \mathrm{h}(L(\mathrm{dev}(F^{\sharp, j})) \cup L(\mathrm{dev}(G^{\sharp, j_2})))             & (\textbf{linearity of \(\mathrm{h}\)}) \\
                                         & = \mathrm{h}(L(\mathrm{dev}(F^{\sharp, j}) + \mathrm{dev}(G^{\sharp, j_2})))                   & (\textbf{Def.: language of sum})       \\
                                         & = \mathrm{h}(L(\mathrm{dev}(E^{\sharp, j})))                                                   & (\textbf{Def.: \(\mathrm{dev}\)})
          \end{align*}
    \item If \(E^{\sharp, j} = \phi({E_1}^{\sharp, j}, \ldots, {E_n}^{\sharp, j_n})\), then
          \begin{align*}
            \mathrm{h}(L(E^{\sharp, j})) & = \mathrm{h}(L(\phi({E_1}^{\sharp, j}, \ldots, {E_n}^{\sharp, j_n})))                            & (\textbf{Def.: language of a tilde})    \\
                                         & = \mathrm{h}(L(\mathrm{dev}_\phi({E_1}^{\sharp, j}, \ldots, {E_n}^{\sharp, j_n})))               & (\textbf{Lemma~\ref{lem lang dev phi}}) \\
                                         & = \mathrm{h}(L(\mathrm{dev}(\mathrm{dev}_\phi({E_1}^{\sharp, j}, \ldots, {E_n}^{\sharp, j_n})))) & (\textbf{Induction Hypothesis})         \\
                                         & = \mathrm{h}(L(\mathrm{dev}(E^{\sharp, j})))                                                     & (\textbf{Def.: \(\mathrm{dev}\)})
          \end{align*}\\
  \end{itemize}
\end{proof}
In combination with Lemma~\ref{lem compat pos}, we obtain the following result.
\begin{corollary}
  Let \(E\) be an extended expression.
  Then
  \begin{equation*}
    L(E) = \mathrm{h}(L(E^\sharp)) = \mathrm{h}(L(E^{\musDoubleSharp})).
  \end{equation*}
\end{corollary}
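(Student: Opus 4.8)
The plan is to simply chain together the two results that immediately precede the statement. First I would recall that Lemma~\ref{lem compat pos} already gives the left-hand equality \(L(E) = \mathrm{h}(L(E^\sharp))\) for every extended expression \(E\); nothing further is needed there. Then I would invoke the Proposition just proved above, which establishes \(\mathrm{h}(L(E^\sharp)) = \mathrm{h}(L(E^{\musDoubleSharp}))\). Concatenating the two equalities yields \(L(E) = \mathrm{h}(L(E^\sharp)) = \mathrm{h}(L(E^{\musDoubleSharp}))\), which is exactly the claim.

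If one wanted a self-contained argument rather than a citation of the two lemmas, the key steps would be: (1) induct on the structure of \(E\) and recurse on the number of constrained tildes, following the same template as the proofs of Lemma~\ref{lem compat pos} and of the Proposition; (2) in the base and the Boolean/regular constructor cases, push \(\mathrm{h}\) through unions and catenations using its linearity and the fact that it is a monoid morphism; (3) in the constrained-tilde case \(E = \phi(E_1,\ldots,E_n)\), rewrite \(L(E)\) via Equation~\eqref{def language}, replace each operand language by its surlinearized counterpart using the induction hypothesis, and observe that passing to \(\mathrm{dev}_\phi\) only prepends an interpretation index to each position list, which \(\mathrm{h}\) forgets (this is the content of Lemma~\ref{lem lang dev phi}); then apply \(\mathrm{dev}\) recursively to the operands inside \(\mathrm{dev}_\phi\).

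Since the two feeder results are already available, there is essentially no obstacle here — the corollary is a bookkeeping step. The only point requiring a moment of care is making sure the two statements are being composed over the \emph{same} \(E\) and the \emph{same} linearization convention (the couple \((k,[])\) fixed earlier), so that the intermediate object \(L(E^\sharp)\) appearing on the right of Lemma~\ref{lem compat pos} is literally the one appearing on the left of the Proposition; once that is noted, transitivity of equality finishes the proof.
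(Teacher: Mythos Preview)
Your proposal is correct and matches the paper's approach exactly: the corollary is stated right after the Proposition with the remark ``In combination with Lemma~\ref{lem compat pos}, we obtain the following result,'' so the intended proof is precisely the two-line chaining you describe. Your additional remark about ensuring the same linearization convention is a reasonable sanity check but not something the paper dwells on.
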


Finally, since after relabelling the transitions of \(G_{E^{\musDoubleSharp}}\) we obtain
an automaton recognizing \(\mathrm{h}(L(E^{\musDoubleSharp}))\), we can determine
the language recognized by \(G_E\).
\begin{theorem}
  Let \(E\) be an extended expression.
  Then \(L(G_E) = L(E)\).
\end{theorem}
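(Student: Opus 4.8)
The plan is to obtain $L(G_E) = L(E)$ by assembling the identities accumulated in this section, so almost all of the genuine work has already been done by the surlinearization machinery; what remains is a short three‑step chain. First I would make explicit how $G_E$ relates to $G_{E^\sharp}$: by the construction recalled at the end of Subsection~\ref{subsec clas cons} and reused for constrained tildes, $G_E$ is obtained from $G_{E^\sharp}$ by applying the delinearization morphism $\mathrm{h}$ to the transition labels. Since $\mathrm{h}$ is alphabetic (each indexed symbol $a_{k, js}$ is sent to the single symbol $a$), relabelling an NFA this way produces an automaton whose recognized language is exactly the $\mathrm{h}$‑image of the original: a path in $G_{E^\sharp}$ from $0$ to a final state reading $d_1 \cdots d_m$ is, letter by letter, a path in $G_E$ reading $\mathrm{h}(d_1)\cdots\mathrm{h}(d_m)$, and conversely every accepting path of $G_E$ lifts to one of $G_{E^\sharp}$ because a transition of $G_E$ exists only when one of $G_{E^\sharp}$ does. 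Hence $L(G_E) = \mathrm{h}(L(G_{E^\sharp}))$.

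Second, I would rewrite $L(G_{E^\sharp})$ using the corollary stating that $L(G_{E^\sharp})$ equals $L(E^{\musDoubleSharp})$ — itself a consequence of the equality $G_{E^\sharp} = G_{E^{\musDoubleSharp}}$ of position automata together with the correctness of the classical Glushkov construction on the ordinary (linear) regular expression $E^{\musDoubleSharp}$. This turns the identity of the first step into $L(G_E) = \mathrm{h}(L(E^{\musDoubleSharp}))$. Third, I would invoke the corollary $L(E) = \mathrm{h}(L(E^\sharp)) = \mathrm{h}(L(E^{\musDoubleSharp}))$, obtained by combining Lemma~\ref{lem compat pos} with the proposition $\mathrm{h}(L(E^\sharp)) = \mathrm{h}(L(E^{\musDoubleSharp}))$. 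Equating the two available expressions for $\mathrm{h}(L(E^{\musDoubleSharp}))$ yields $L(G_E) = L(E)$, which closes the argument.

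The one point deserving care — and the only real obstacle left — is the first step: one must check that collapsing distinct indexed symbols onto a common letter via $\mathrm{h}$ neither creates accepting runs of $G_E$ without a counterpart in $G_{E^\sharp}$ nor destroys any. Because $\mathrm{h}$ is length‑preserving and the transitions of $G_E$ are defined precisely as the $\mathrm{h}$‑images of those of $G_{E^\sharp}$, the correspondence between paths and their label words holds in both directions, so no spurious word is introduced and none is lost; this is the standard fact that alphabetic morphisms commute with the NFA language operator, which I would either cite or dispatch with the one‑line path argument above. Everything else in the proof is substitution of already‑established identities, so the whole theorem fits in a few lines.
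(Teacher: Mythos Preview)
Your proposal is correct and follows essentially the same three-step chain as the paper: $L(G_E) = \mathrm{h}(L(G_{E^\sharp})) = \mathrm{h}(L(E^{\musDoubleSharp})) = L(E)$, invoking the same two corollaries and Lemma~\ref{lem compat pos}. If anything, you are more explicit than the paper about the relabelling step --- the paper simply asserts that relabelling $G_{E^{\musDoubleSharp}}$ yields an automaton recognizing $\mathrm{h}(L(E^{\musDoubleSharp}))$, whereas you spell out the two-directional path correspondence justifying it.
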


\section{Haskell Implementation}\label{sec haskell}

The computations of partial derivatives, of derived term automata and of Glushkov automata
have been implemented in Haskell and is publicly available
on GitHub~\cite{LM23}.
Constrained tildes are implemented using dependently typed programming:
a Boolean formula encoding a constrained tildes operator uses an alphabet the size of which cannot be greater than the length of the list
of expressions the formula is applied on.
Derived term and Glushkov automata can be graphically represented using Dot and Graphviz, and converted in PNG\@.
A parser from string declaration is also included.
A web interface is also available, constructed through reactive functional programming
using Reflex and Reflex-Dom~\cite{reflex, reflex-dom}.
Notice that for implementation consideration, the Boolean formulae atoms start at position \( 0 \).
\begin{figure}[H]
  \begin{center}
    \includegraphics[scale=0.3]{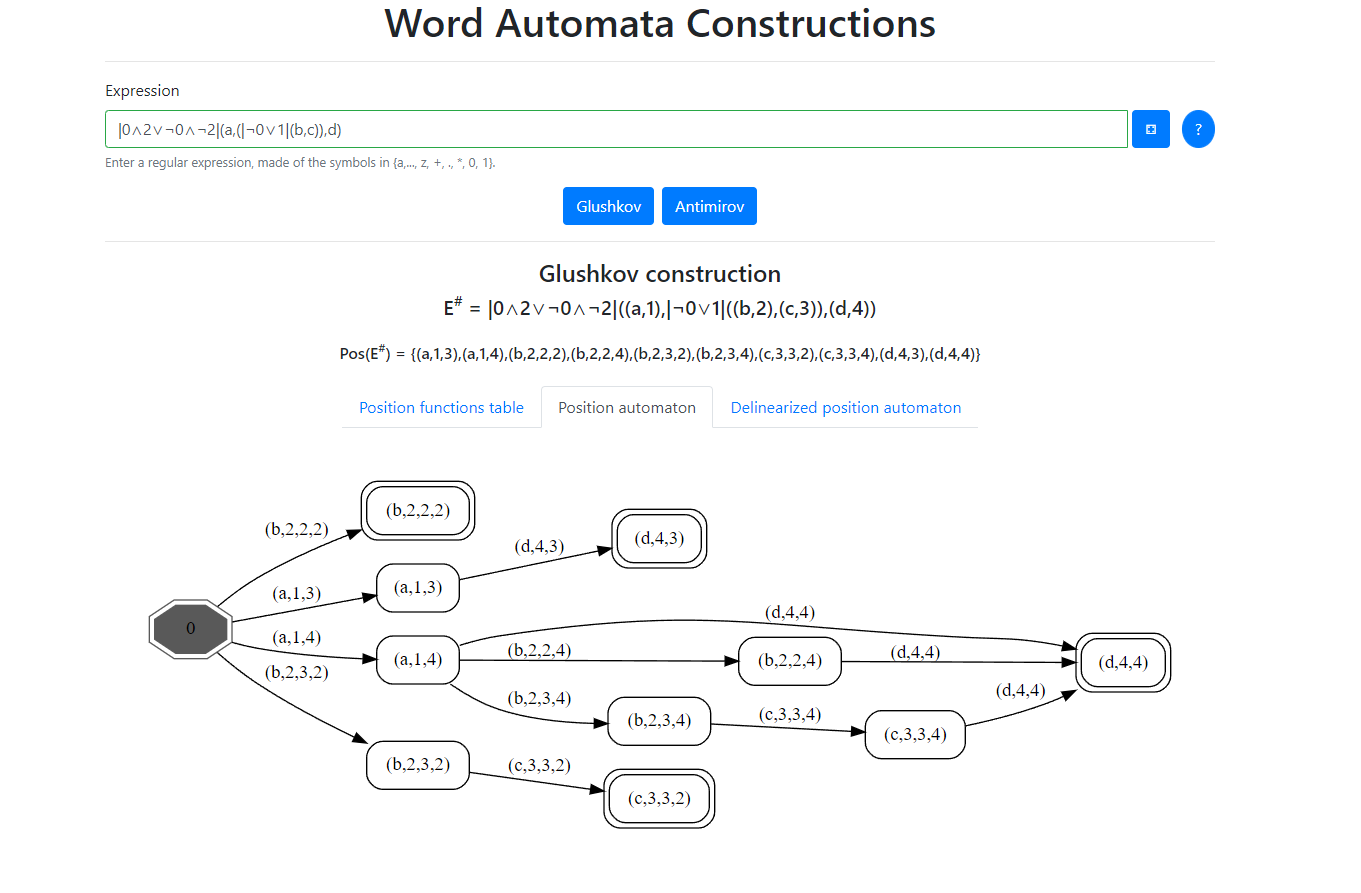}
  \end{center}
  \caption{The Web Interface.}
\end{figure}

\section{Conclusion and Perspectives}

In this paper, we have extended (disjunctive) multi-tildes operators to any Boolean combinations
of these tildes, the constrained multi-tildes, and defined their denoted languages.
We have shown that the action of these operators preserves regularity, that they may lead
to exponentially smaller expressions and how to solve the membership test by defining the partial
derivatives, the (finite) derived term automaton and the position automaton.

The next step of our plan is to study the conversion of an automaton into an equivalent expression,
by first characterizing the structures of derived term automaton and Glushkov automaton, like it was previously done in the classical
regular case~\cite{CZ00,LS05}.

\bibliographystyle{ws-ijfcs}
\bibliography{biblio}

\begin{thebibliography}{10}

\bibitem{Ant96}
V.~M. Antimirov, Partial derivatives of regular expressions and finite
  automaton constructions, {\em Theor. Comput. Sci.} {\bf 155}(2)  (1996)
  291--319.

\bibitem{Bru93}
A.~Br{\"{u}}ggemann{-}Klein, Regular expressions into finite automata, {\em
  Theor. Comput. Sci.} {\bf 120}(2)  (1993)  197--213.

\bibitem{CCM09}
P.~Caron, J.~Champarnaud and L.~Mignot, Multi-tilde operators and their
  glushkov automata, {\em {LATA}\/},  {\em Lecture Notes in Computer Science}
  {\bf 5457}, (Springer, 2009), pp. 290--301.

\bibitem{CCM09b}
P.~Caron, J.~Champarnaud and L.~Mignot, A new family of regular operators
  fitting with the position automaton computation, {\em {SOFSEM}\/},  {\em
  Lecture Notes in Computer Science} {\bf 5404}, (Springer, 2009), pp.
  645--655.

\bibitem{CCM11}
P.~Caron, J.~Champarnaud and L.~Mignot, Multi-bar and multi-tilde regular
  operators, {\em J. Autom. Lang. Comb.} {\bf 16}(1)  (2011)  11--36.

\bibitem{CCM11b}
P.~Caron, J.~Champarnaud and L.~Mignot, Partial derivatives of an extended
  regular expression, {\em {LATA}\/},  {\em Lecture Notes in Computer Science}
  {\bf 6638}, (Springer, 2011), pp. 179--191.

\bibitem{CCM12}
P.~Caron, J.~Champarnaud and L.~Mignot, Multi-tilde-bar derivatives, {\em
  {CIAA}\/},  {\em Lecture Notes in Computer Science} {\bf 7381}, (Springer,
  2012), pp. 321--328.

\bibitem{CCM12b}
P.~Caron, J.~Champarnaud and L.~Mignot, Multi-tilde-bar expressions and their
  automata, {\em Acta Informatica} {\bf 49}(6)  (2012)  413--436.

\bibitem{CZ00}
P.~Caron and D.~Ziadi, Characterization of glushkov automata, {\em Theor.
  Comput. Sci.} {\bf 233}(1-2)  (2000)  75--90.

\bibitem{CZ01}
J.~Champarnaud and D.~Ziadi, From c-continuations to new quadratic algorithms
  for automaton synthesis, {\em Int. J. Algebra Comput.} {\bf 11}(6)  (2001)
  707--736.

\bibitem{Coo71}
S.~A. Cook, The complexity of theorem-proving procedures, {\em {STOC}\/},
  ({ACM}, 1971), pp. 151--158.

\bibitem{DLL62}
M.~Davis, G.~Logemann and D.~W. Loveland, A machine program for
  theorem-proving, {\em Commun. {ACM}} {\bf 5}(7)  (1962)  394--397.

\bibitem{DP60}
M.~Davis and H.~Putnam, A computing procedure for quantification theory, {\em
  J. {ACM}} {\bf 7}(3)  (1960)  201--215.

\bibitem{GS96}
I.~Glaister and J.~O. Shallit, A lower bound technique for the size of
  nondeterministic finite automata, {\em Inf. Process. Lett.} {\bf 59}(2)
  (1996)  75--77.

\bibitem{Glu61}
V.~M. Glushkov, The abstract theory of automata, {\em Russian Mathematical
  Surveys} {\bf 16}  (1961)  1--53.

\bibitem{LS05}
S.~Lombardy and J.~Sakarovitch, How expressions can code for automata, {\em
  {RAIRO} Theor. Informatics Appl.} {\bf 39}(1)  (2005)  217--237.

\bibitem{LM23}
L.~Mignot, An implementation of constrained tildes expressions:
  Constrained\-{T}ildes\-{H}askell
  \url{https://github.com/LudovicMignot/ConstrainedTildesHaskell},  (2023).

\bibitem{PZC96}
J.~Ponty, D.~Ziadi and J.~Champarnaud, A new quadratic algorithm to convert a
  regular expression into an automaton, {\em Workshop on Implementing
  Automata\/},  {\em Lecture Notes in Computer Science} {\bf 1260}, (Springer,
  1996), pp. 109--119.

\bibitem{Qui82}
W.~V.~O. Quine, {\em Methods of logic} (Harvard University Press, 1982).

\bibitem{reflex}
R.~Trinkle, reflex \url{https://github.com/reflex-frp/reflex},  (2023).

\bibitem{reflex-dom}
R.~Trinkle, reflex-dom \url{https://github.com/reflex-frp/reflex-dom},  (2023).

\end{thebibliography}

\end{document}